\def\VersionWithComments
\def\BibTeX{{\rm B\kern-.05em{\sc i\kern-.025em b}\kern-.08em
    T\kern-.1667em\lower.7ex\hbox{E}\kern-.125emX}}
\newtheorem{example}{Example}
\newtheorem{theorem}{Theorem}
\newtheorem{proposition}{Proposition}
\crefname{algorithm}{alg.}{algs.}
\Crefname{algorithm}{Algorithm}{Algorithms}
\tikzstyle{place}=[circle,thick,draw=blue!75,fill=blue!20,minimum size=6mm]
\tikzstyle{contour place}=[place,draw=red!100]
\tikzstyle{red place}=[place,draw=red!75,fill=red!20]
\tikzstyle{gray place}=[place,draw=black!100,fill=black!30]
\tikzstyle{transition}=[ rectangle,thick, fill=black, minimum width=8mm, inner ysep=2pt]
\tikzstyle{red transition}=[ rectangle,thick, fill=red!75, minimum width=8mm, inner ysep=2pt]
\tikzstyle{gray transition}=[ rectangle,thick, draw=black!100,fill=black!30, minimum width=8mm, inner ysep=2pt]
\tikzstyle{blue transition}=[ rectangle,thick, fill=blue!75, minimum width=8mm, inner ysep=2pt]
\tikzstyle{every label}=[black]
\tikzstyle{gateSEQ}=[diamond]
\tikzstyle{gateNULL}=[trapezium,trapezium left angle=120, trapezium right angle=120]
\tikzset{depth label/.style={
  label={[draw=none,green!60!black,label distance=-2pt]left:#1}}}
\tikzset{level label/.style={
  label={[draw=none,blue,label distance=-4pt]right:#1}}}
\tikzstyle{every node}=[initial text=]
\tikzstyle{location}=[rectangle, rounded corners, minimum size=12pt, draw=black, fill=blue!10, inner sep=2pt]
\tikzstyle{location10}=[location, minimum size=10pt]
\tikzstyle{invariant}=[draw=black, dotted, inner sep=1pt] 
\tikzstyle{goodtile} = [draw=green!50!black, fill=green]
\tikzstyle{badtile} = [draw=red!50!black, fill=red]
\tikzstyle{invisible}=[draw=none]
\tikzstyle{final}=[double]
\tikzstyle{urgent}=[fill=yellow!50]
\tikzstyle{bad}=[fill=red!50]
\tikzstyle{every node}=[initial text=]
\tikzstyle{final}=[double]
\tikzstyle{sync}=[draw=blue,thick]
\tikzstyle{square}=[regular polygon,regular polygon sides=4]
\tikzstyle{seq}=[path picture={
  \newcommand{\marginX}{\marginnote{\huge{\quad\textbf{!}\quad}}}
  \newcommand{\wop}[1]{\textcolor{red!50!blue}{\marginX{}[\textbf{Wojciech}: #1
      ]}}
  \newcommand{\lp}[1]{\textcolor{blue!50}{\marginX{}[\textbf{Laure}: #1 ]}}
  \newcommand{\teo}[1]{\textcolor{green!50!black}{\marginX{}[\textbf{Teofil}: #1]}}
  \newcommand{\ja}[1]{\textcolor{DarkOrchid}{\marginX{}[\textbf{Jaime}: #1 ]}}
  \newcommand{\co}[1]{\textcolor{orange}{\marginX{}[\textbf{Carlos}: #1 ]}}
  \newcommand{\todo}[1]{\textcolor{red}{\marginX{}TODO: #1}}
  \newcommand{\wop}[1]{}
  \newcommand{\lp}[1]{}
  \newcommand{\ja}[1]{}
  \newcommand{\co}[1]{}
  \newcommand{\teo}[1]{}
  \newcommand{\todo}[1]{}
\newcommand{\eg}{\emph{e.g.}\xspace}
\newcommand{\ie}{\emph{i.e.}\xspace}
\newcommand{\tool}{\texttt{ADT2AMAS}\xspace}
\def\ADT/{ADTree}
\def\DAG/{DAG}
\newcommand{\actAttack}{\ensuremath{a}}
\newcommand{\ActAttack}{\ensuremath{A}}
\newcommand{\actDefence}{\ensuremath{d}}
\newcommand{\leaf}[1]{\ensuremath{\mathtt{#1}}\xspace}
\newcommand{\gate}[1]{\ensuremath{\mathtt{\MakeUppercase{#1}}}\xspace}
\newcommand{\gateAND}{\gate{and}}
\newcommand{\gateOR}{\gate{or}}
\newcommand{\gateCAND}{\gate{cand}}
\newcommand{\gateLEAF}{\gate{leaf}}
\newcommand{\gateNULL}{\gate{null}}
\newcommand{\gateNODEF}{\gate{nodef}}
\newcommand{\gateSAND}{\gate{sand}}
\newcommand{\gateSEQ}{\gate{seq}}
\newcommand{\gateSCAND}{\gate{scand}}
\newcommand{\tunit}{\ensuremath{t_{\mathit{unit}}}}
\newcommand{\DAGset}{\ensuremath{\mathit{DAG\_set}}}
\newcommand{\inputDAG}{\ensuremath{\mathit{DAG}}}
\newcommand{\workingSet}{\ensuremath{S}}
\newcommand{\N}{\ensuremath{N}}
\newcommand{\varAgent}{\ensuremath{\mathit{agent}}}
\newcommand{\varSlot}{\ensuremath{\mathit{slot}}}
\newcommand{\varSlots}{\ensuremath{\mathit{slots}}}
\newcommand{\varChildNode}{\ensuremath{\mathit{child\_node}}}
\newcommand{\varCurrentNode}{\ensuremath{\mathit{current\_node}}}
\newcommand{\varParentAgent}{\ensuremath{\mathit{par\_agent}}}
\newcommand{\varParentNode}{\ensuremath{\mathit{parent\_node}}}
\newcommand{\varNodesLeft}{\ensuremath{\mathit{n\_remain}}}
\newcommand{\varBound}{\ensuremath{\mathit{low\_bound}}}
\newcommand{\varUpperBound}{\ensuremath{\mathit{up\_bound}}}
\newcommand{\varMaxAgents}{\ensuremath{\mathit{max\_agents}}}
\newcommand{\varNumAgents}{\ensuremath{\mathit{agents}}}
\newcommand{\varOutput}{\ensuremath{\mathit{output}}}
\newcommand{\varCurrentOutput}{\ensuremath{\mathit{curr\_output}}}
\newcommand{\varCandidate}{\ensuremath{\mathit{candidate}}}
\newcommand{\agent}{\ensuremath{\mathit{agent}}}
\newcommand{\children}{\ensuremath{\mathit{child}}}
\newcommand{\depth}{\ensuremath{\mathit{depth}}}
\newcommand{\keep}{\ensuremath{\mathit{keep}}}
\newcommand{\level}{\ensuremath{\mathit{level}}}
\newcommand{\node}{\ensuremath{\mathit{node}}}
\newcommand{\parent}{\ensuremath{\mathit{parent}}}
\newcommand{\slot}{\ensuremath{\mathit{slot}}}
\newcommand{\type}{\ensuremath{\mathit{type}}}
\def\scalebox{.5}{\input{examples/#}}1{\scalebox{.5}{\input{examples/#1}}}
\def\scalebox{.555}{\input{examples/#}}1{\scalebox{.555}{\input{examples/#1}}}
\begin{document}

\title{Optimal Scheduling of Agents in \ADT/s: Specialised Algorithm and Declarative Models
  \thanks{The authors acknowledge the support of CNRS and PAN, under the IEA project MoSART,
    and of NCBR Poland and FNR Luxembourg, under the PolLux/FNR-CORE project STV (POLLUX-VII/1/2019).}
}

\author{\IEEEauthorblockN{Jaime Arias\textsuperscript{1}, Carlos Olarte\textsuperscript{1}, Laure Petrucci\textsuperscript{1}}
  \IEEEauthorblockA{\textsuperscript{1}\textit{LIPN, CNRS UMR 7030,} \\
    \textit{Université Sorbonne Paris Nord}\\
    Villetaneuse, France \\
    \{arias, olarte, petrucci\}@lipn.univ-paris13.fr}
  \and
  \IEEEauthorblockN{{\L}ukasz Maśko\textsuperscript{2}, Wojciech Penczek\textsuperscript{2}, Teofil Sidoruk\textsuperscript{2, 3}}
  \IEEEauthorblockA{\textsuperscript{2}\textit{Institute of Computer Science, Polish Academy of Sciences} \\
    \textsuperscript{3}\textit{Faculty of Math. and Inf. Science, Warsaw University of Technology}\\
    Warsaw, Poland \\
    \{masko, penczek, t.sidoruk\}@ipipan.waw.pl}
}

\maketitle

\begin{abstract}
  Expressing attack-defence trees in a multi-agent setting allows for studying a new
  aspect of security scenarios,
  namely how the number of agents and their task assignment impact the performance,
  \eg attack time, of strategies executed by opposing coalitions.
  Optimal scheduling of agents' actions, a non-trivial problem, is thus vital.
  We discuss associated caveats and	propose an algorithm that synthesises such an assignment,
  targeting minimal attack time and using the minimal number of agents for a given
  attack-defence tree.
  We also investigate an alternative approach for the same problem using
  Rewriting Logic, starting with a simple and elegant declarative model,
  whose correctness (in terms of schedule's optimality) is self-evident.
	We then refine this specification,
	inspired by the design of our specialised algorithm,
  to obtain an efficient system that can be used as a
  playground to explore various
  aspects of attack-defence trees. We compare the two approaches
  on different benchmarks.
\end{abstract}

\begin{IEEEkeywords}
  attack-defence trees, multi-agent systems, scheduling, rewriting logic
\end{IEEEkeywords}

\section{Introduction}
\label{sec:intro}


Security of safety-critical multi-agent systems \cite {Wooldridge02a} is a major challenge. 
Attack-defence trees (\ADT/s) have been developed
to evaluate the safety of systems and to study interactions between attacker and defender parties~\cite{KordyMRS10,Zaruhi_pareto_2015}.
They provide a simple graphical formalism of possible attacker's actions to be taken in order
to attack a system and the defender's defences employed to protect the system. 
Recently, it has been proposed to model \ADT/s in the formalism of
asynchronous multi-agent systems (AMAS) extended with certain \ADT/ 
characteristics~\cite{ICECCS2019,ICFEM2020}.
In this setting, one can reason about attack/defence scenarios 
considering agent distributions over tree nodes and their impact on the feasibility and performance
(quantified by metrics such as time and cost) of attacking and defending strategies executed by specific coalitions.

\subsection{Minimal schedule with minimal number of agents}

The time metric, on which we focus here, is clearly affected by
both the number of available agents and their distribution over \ADT/ nodes. 
Hence, there arises the problem of optimal scheduling, \ie{} obtaining an assignment that achieves
the lowest possible time, while using the minimum number of agents required for an attack to be feasible. 
To that end, we first preprocess the input \ADT/, transforming it into a Directed Acyclic Graph (\DAG/),
where specific types of \ADT/ gates are replaced with sequences of nodes with normalised time 
(\ie{} duration of either zero, or the greatest common factor across all nodes of
the original \ADT/).
Because some \ADT/ constructs (namely, \gateOR gates and defences) induce multiple alternative outcomes,
we execute the scheduling algorithm itself on a number of independently considered \DAG/ variants.
For each such variant, we 
synthesise a schedule multiple times in a divide-and-conquer strategy,
adjusting the number of agents until the lowest one that produces a valid assignment is found.
Since we preserve labels during the preprocessing step, all \DAG/ nodes are traceable back to specific gates and leaves of the original \ADT/.
Thus, in the final step we ensure that the same agent is assigned to nodes of the same origin, reshuffling the schedule if necessary.


\subsection{An alternative approach: Rewriting Logic}

 We also study the optimal scheduling problem for ADTrees through the lenses
of
    Rewriting Logic (RL)  \cite{meseguer-rltcs-1992} (see also the surveys in
    \cite{meseguer-twenty-2012,DBLP:journals/jlap/DuranEEMMRT20}).
    RL is a formal model of computation whose basic building unit is a rewrite
    theory $\mathcal{R}$. 
    Roughly, the states of the
    modelled system are encoded in $\mathcal{R}$ via algebraic data types, 
    and the (non-deterministic) transitions of the system are expressed
    by a set of (conditional) rewriting rules. 
    If the theory
    $\mathcal{R}$ satisfies certain executability conditions (making  the
    mathematical and the execution semantics of $\mathcal{R}$ coincide),
    $\mathcal{R}$ can be executed in  Maude \cite{DBLP:conf/maude/2007}, a
    high-performance language and system supporting rewriting logic. 

   We start with a rewrite theory giving meaning to the different gates of
      an ADTree. The correctness of such a specification is self-evident and it
      allows us to solve the optimal scheduling problem by exploring, via a
      search procedure, all the possible paths leading to an attack.
      Unfortunately, this procedure does not scale well for complex ADTrees.
      Hence, we refine the first rewrite theory by incorporating
      some of the design principles devised in our specialised algorithm.
      We better control the non-deterministic choices in
      the specification, thus reducing the search space. The resulting theory
      can be effectively used in the case studies presented here and it opens
      the possibility of exploring different optimisation ideas and different
      aspects of ADTrees as discussed in \Cref{sec:conclusion}.

\subsection{Contributions}

In this paper, we:
\begin{enumerate*}[($i$)]
	\item present and prove the correctness of an algorithm for \ADT/s which finds an optimal assignment of the minimal number of agents for all possible \DAG/ variants of a given attack/defence scenario,
	\item show the scheduling algorithm's complexity to be quadratic in the number of nodes of its preprocessed input \DAG/, 
	\item implement the algorithm in our tool \tool, 
	\item {propose a rewrite theory, implemented in Maude,  for a general solution to the considered problem, evaluate results and compare them against those of our specialised algorithm.}
\end{enumerate*}

\subsection{Related work}
\label{sec:intro:relatedwork}

\ADT/s \cite{KordyMRS10,KMRS14}
are a popular formalism that has been implemented in a broad range
of analysis frameworks \cite{Gadyatskaya_2016,AGKS15,GIM15,ANP16},
comprehensively surveyed in \cite{KPCS13,WAFP19}.
They remain extensively studied today \cite{FilaW20}.
Of particular relevance is the \ADT/ to AMAS translation \cite{ICFEM2020},
based on the semantics from \cite{AAMASWJWPPDAM2018a}.
Furthermore, the problem discussed in this paper is clearly related to
parallel program scheduling \cite{CompSched,KwokA99}.
Due to time normalisation, it falls into the category of Unit Computational Cost (UCC) graph scheduling problems,
which can be effectively solved for tree-like structures \cite{Hu1961},
but cannot be directly applied to a set of \DAG/s.
Although a polynomial solution for interval-ordered \DAG/s was proposed by \cite{PapaYan1979},
their algorithm does not guarantee the minimal number of agents.
Due to zero-cost communication in all considered graphs, the problem can also be classified as No Communication (NC) graph scheduling.
A number of heuristic algorithms using list scheduling were proposed \cite{CompSched}, including
Highest Levels First with No Estimated Times (HLFNET),
Smallest Co-levels First with no Estimated Times (SCFNET),
and Random, where nodes in the DAG are assigned priorities randomly.
Variants assuming non-uniform node computation times are also considered, 
but are not applicable to the problem solved in this paper.
Furthermore, this class of algorithms does not aim at finding a schedule with the minimal number of processors or agents.
On the other hand, known algorithms that include such a limit, \ie for the Bounded Number of Processors (BNP) class of problems,
assume non-zero communication cost and rely on the clustering technique,
reducing communication, and thus schedule length, by mapping nodes to processing units.
Hence, these techniques are not directly applicable. 

The algorithm described in this paper can be classified as list scheduling with a fusion of HLFNET and SCFNET heuristics,
but with additional restriction on the number of agents used.
The length of a schedule is determined as the length of the critical path of a graph.
The number of minimal agents needed for the schedule is found with bisection.

Branching schedules analogous to the variants discussed in \Cref{sec:preprocess} have been previously explored, 
albeit using different models that either include probability \cite{Towsley86}
or require an additional \DAG/ to store possible executions \cite{ElRewiniA95}.
Zero duration nodes are also unique to the \ADT/ setting.

To the best of our knowledge, this is the first work dealing with agents in this context.
Rather, scheduling in multi-agent systems typically focuses on agents' \textit{choices}
in cooperative or competitive scenarios, \eg in models such as BDI \cite{NunesL14,DannT0L20}.

Rewriting logic and Maude have been extensively used for the formal analysis
and verification of systems. The reader can find in
\cite{meseguer-twenty-2012,DBLP:journals/jlap/DuranEEMMRT20} a survey of the
different techniques and applications in this field. In the context of ADTrees,
the work in \cite{DBLP:conf/crisis/KordyKB16} and the companion tool SPTool
define a rewrite theory that allows for checking the equivalence between
ADTrees featuring sequential AND gates. The work in
\cite{DBLP:conf/gramsec/EadesJB18} extends the SPTool by adding different
backend theories written in Maude: one for checking equivalence of ADTrees and
one implemented a linear-logic based semantics
\cite{DBLP:journals/fuin/HorneMT17} for it.  In none of these works and tools,
the problem of finding the optimal scheduling for agents is considered.

\subsection{Outline}

The next section briefly recalls the \ADT/ formalism.
In \Cref{sec:preprocess}, several preprocessing steps are discussed,
including transforming the input tree to a DAG, normalising node attributes,
and handling different types of nodes. 
\Cref{sec:algo} describes the main algorithm, as well as a proof of its correctness and optimality.
The algorithm, implemented in our tool \tool\cite{ADT2AMASDemoPaper},
is benchmarked  in \Cref{sec:expe}.
The rewriting logic specification is described and experimented in \Cref{sec:rewriting},
and we discuss the pros and cons with respect to the specialized algorithm proposed here. 
\Cref{sec:conclusion} concludes the paper and provides perspectives for future work. 

{This paper is an extended version of \cite{ICECCS2022}. 
    From the theoretical point of view, 
    the rewriting semantics in \Cref{sec:rewriting} is completely new. 
    From the practical side, we provide another tool, \toolM,  that enacts the
    rewriting approach. }

\section{Attack-Defence Trees}
\label{sec:ADT}


To keep the paper self-contained, we briefly recall the basics of \ADT/s
and their translation to a multi-agent setting.

\subsection{Attack-defence trees}
\label{sec:adt:adt}

\ADT/s are a well-known formalism that models security scenarios
as an interplay between attacking and defending parties.
\Cref{fig:adt:constructs} depicts the basic constructs
used throughout the paper.
For a more comprehensive overview, we refer the reader to~\cite{ICFEM2020}.


\begin{figure}[!!htb]
	\begingroup
		\def\nodesubtree{\node[isosceles triangle, isosceles triangle apex angle=70, shape border rotate=90, minimum size=6.5mm]}
		\def\nodeattack{\node[state,draw=red,ultra thick]}
		\def\nodedefence{\node[rectangle,draw=Green,ultra thick]}
	\captionsetup[subfigure]{justification=centering}
	\centering
	\begin{subfigure}[b]{0.2\linewidth}
		\centering
		\scalebox{0.5}{
			\begin{tikzpicture}
				\normalsize
				\nodeattack[minimum size=6mm] (LA) {$\actAttack{}$};
			\end{tikzpicture}
		}
		\subcaption{leaf (attack)}
	\end{subfigure}
	\hfill
	\begin{subfigure}[b]{0.2\linewidth}
		\centering
		\scalebox{0.5}{
			\begin{tikzpicture}[every node/.style={ultra thick,draw=red,minimum size=6mm}]
				\normalsize
				\node[and gate US,point up,logic gate inputs=nn] (A)
					{\rotatebox{-90}{\large$\ActAttack$}};
				\nodesubtree at (-1,-1.3) (a1) {$\!\actAttack_1\!\!$};
				\nodesubtree at ( 1,-1.3) (an) {$\!\actAttack_n\!\!$};
				\node[draw=none] at (0,-1.3) {$\cdots$};
				\draw (a1.north) -- ([yshift=1.5mm]a1.north) -| (A.input 1);
				\draw (an.north) -- ([yshift=1.3mm]an.north) -| (A.input 2);
			\end{tikzpicture}
		}
		\subcaption{\gateAND}
	\end{subfigure}
	\hfill
	\begin{subfigure}[b]{0.2\linewidth}
		\centering
		\scalebox{0.5}{
			\begin{tikzpicture}[every node/.style={ultra thick,draw=red,minimum size=6mm}]
				\normalsize
				\node[or gate US,point up,logic gate inputs=nn] (A)
					{\rotatebox{-90}{\large$\ActAttack$}};
				\nodesubtree at (-1,-1.3) (a1) {$\!\actAttack_1\!\!$};
				\nodesubtree at ( 1,-1.3) (an) {$\!\actAttack_n\!\!$};
				\node[draw=none] at (0,-1.3) {$\cdots$};
				\draw (a1.north) -- ([yshift=1.5mm]a1.north) -| (A.input 1);
				\draw (an.north) -- ([yshift=1.3mm]an.north) -| (A.input 2);
			\end{tikzpicture}
		}
		\subcaption{\gateOR}
	\end{subfigure}
	\hfill
		\begin{subfigure}[b]{0.2\linewidth}
		\centering
		\scalebox{0.5}{
			\begin{tikzpicture}[every node/.style={ultra thick,draw=red,minimum size=6mm}]
				\normalsize
				\node[and gate US,point up,logic gate inputs=nn, seq=4pt] (A)
					{\rotatebox{-90}{\large$\ActAttack$}};
				\nodesubtree at (-1,-1.3) (a1) {$\!\actAttack_1\!\!$};
				\nodesubtree at ( 1,-1.3) (an) {$\!\actAttack_n\!\!$};
				\node[draw=none] at (0,-1.3) {$\cdots$};
				\draw (a1.north) -- ([yshift=1.5mm]a1.north) -| (A.input 1);
				\draw (an.north) -- ([yshift=1.5mm]an.north) -| (A.input 2);
			\end{tikzpicture}
		}
		\subcaption{\gateSAND}
		\end{subfigure}
	\hfill
		\begin{subfigure}[b]{0.2\linewidth}
			\centering
			\scalebox{0.5}{
				\begin{tikzpicture}
					\normalsize
					\nodedefence[minimum size=6mm] (LD) {$\actDefence{}$};
				\end{tikzpicture}
			}
			\subcaption{leaf (defence)}
		\end{subfigure}
	\hfill
		\begin{subfigure}[b]{0.2\linewidth}
		\centering
		\scalebox{0.5}{
			\begin{tikzpicture}[every node/.style={ultra thick,draw=red,minimum size=6mm}]
				\normalsize
				\node[and gate US,point up,logic gate inputs=ni] (A)
					{\rotatebox{-90}{\large$\ActAttack$}};
				\nodesubtree             at (-1,-1.36) (a) {$\actAttack$};
				\nodesubtree[draw=Green] at ( 1,-1.32) (d) {$\!\actDefence\!$};
				\draw (a.north) -- ([yshift=1.4mm]a.north) -| (A.input 1);
				\draw (d.north) -- ([yshift=1.5mm]d.north) -| (A.input 2);
			\end{tikzpicture}
		}
		\subcaption{\gateCAND}
	\end{subfigure}
	\hfill
		\begin{subfigure}[b]{0.2\linewidth}
		\centering
		\scalebox{0.5}{
			\begin{tikzpicture}[every node/.style={ultra thick,draw=red,minimum size=6mm}]
				\normalsize
				\node[or gate US,point up,logic gate inputs=ni] (A)
					{\rotatebox{-90}{\large$\ActAttack$}};
				\nodesubtree             at (-1,-1.28) (a) {$\actAttack$};
				\nodesubtree[draw=Green] at ( 1,-1.24) (d) {$\!\actDefence\!$};
				\draw (a.north) -- ([yshift=1.4mm]a.north) -| (A.input 1);
				\draw (d.north) -- ([yshift=1.5mm]d.north) -| (A.input 2);
			\end{tikzpicture}
		}
		\subcaption{\gateNODEF}
	\end{subfigure}
	\hfill
		\begin{subfigure}[b]{0.2\linewidth}
		\centering
		\scalebox{0.5}{
			\begin{tikzpicture}[every node/.style={ultra thick,draw=red,minimum size=6mm}]
				\normalsize
				\node[and gate US,point up,logic gate inputs=ni, seq=9pt] (A)
					{\rotatebox{-90}{\large$\ActAttack$}};
				\nodesubtree             at (-1,-1.36) (a) {$\actAttack$};
				\nodesubtree[draw=Green] at ( 1,-1.32) (d) {$\!\actDefence\!$};
				\draw (a.north) -- ([yshift=1.4mm]a.north) -| (A.input 1);
				\draw (d.north) -- ([yshift=1.5mm]d.north) -| (A.input 2);
			\end{tikzpicture}
		}
		\subcaption{\gateSCAND}
	\end{subfigure}
	\endgroup

	\caption{Basic \ADT/ constructs \label{fig:adt:constructs}}
\end{figure}

Attacking and defending actions are depicted in red and green, respectively.
Leaves represent individual actions at the highest level of granularity.
Different types of gates allow for modelling increasingly broad intermediary goals,
all the way up to the root, which corresponds to the overall objective.
\gateOR and \gateAND gates are defined analogously to their logical counterparts.
\gateSAND is a sequential variant of the latter, \ie the entire subtree $a_i$ needs to be completed before handling $a_{i+1}$.
While only shown in attacking subtrees here, these gates may refine defending goals in the same way.
Reactive or passive countering actions can be expressed using gates
\gateCAND (counter defence; successful iff \actAttack{} succeeds and \actDefence{} fails),
\gateNODEF (no defence; successful iff either \actAttack{} succeeds or \actDefence{} fails),
and \gateSCAND (failed reactive defence; sequential variant of \gateCAND, where \actAttack{} occurs first).
We collectively refer to gates and leaves as \textit{nodes}.

\ADT/ nodes may additionally have numerical \textit{attributes},
\eg the time needed for an attack, or its financial cost.
Boolean functions over these attributes, called \textit{conditions},
may then be associated with counter-defence nodes to serve as additional constraints
for the success or failure of a defending action.

In the following, the \emph{treasure hunters} \ADT/ in \Cref{fig:adt:treasure}
will be used as a running example.
While both the gatekeeper \leaf{b} and the door \leaf{f}
need to be taken care of to steal the treasure (\gate{ST}),
just one escape route (either \leaf{h} or \leaf{e}) is needed
to flee (\gate{GA}), with \gate{TF} enforcing sequentiality.


\begin{figure}[!!htb]
	\hspace{-1em}%
	\begin{subfigure}[b]{0.45\linewidth}
		\centering
		\scalebox{.75}{
			\begin{tikzpicture}
				[every node/.style={ultra thick,draw=red,minimum size=6mm}]
				\node[and gate US,point up,logic gate inputs=ni] (ca)
				{\rotatebox{-90}{\texttt{TS}}};
				\node[rectangle,draw=Green,minimum size=8mm, below = 5mm of ca.west, xshift=10mm] (d) {\texttt{p}};
				\draw (d.north) -- ([yshift=0.28cm]d.north) -| (ca.input 2);
				\node[and gate US,point up,logic gate inputs=nn, seq=4pt, below = 9mm of ca.west, yshift=14mm] (A)
				{\rotatebox{-90}{\texttt{TF}}};
				\draw (A.east) -- ([yshift=0.15cm]A.east) -| (ca.input 1);
				\node[and gate US,point up,logic gate inputs=nn, below = 9mm of A.west, yshift=14mm] (a1)
				{\rotatebox{-90}{\texttt{ST}}};
				\draw (a1.east) -- ([yshift=0.15cm]a1.east) -| (A.input 1);
				\node[state, below = 4mm of a1.west, xshift=-5mm] (a1n) {\texttt{b}};
				\draw (a1n.north) -- ([yshift=0.15cm]a1n.north) -| (a1.input 1);
				\node[state, below=4mm of a1.west, xshift=5mm] (a11) {\texttt{f}};
				\draw (a11.north) -- ([yshift=0.15cm]a11.north) -| (a1.input 2);
				\node[or gate US,point up,logic gate inputs=nn, below = 10mm of A.west, yshift=-7mm] (a2)
				{\rotatebox{-90}{\texttt{GA}}};
				\draw (a2.east) -- ([yshift=0.15cm]a2.east) -| (A.input 2);
				\node[state, below = 4mm of a2.west, xshift=-5mm] (a21) {\texttt{h}};
				\draw (a21.north) -- ([yshift=0.15cm]a21.north) -| (a2.input 1);
				\node[state, below=4mm of a2.west, xshift=5mm] (a2n) {\texttt{e}};
				\draw (a2n.north) -- ([yshift=0.15cm]a2n.north) -| (a2.input 2);
			\end{tikzpicture}
		}
		\subcaption{\ADT/}
	\end{subfigure}
	\hspace{-.5em}
	\begin{subfigure}[b]{.45\linewidth}
		\centering
		\scalebox{.75}{\parbox{\linewidth}{%
				~%
				\begin{tabular}{r@{~}l@{~\;}r@{~\;}r}
					\multicolumn{2}{l}{\bf Name} & {\bf Cost}         & {\bf Time}              \\
					\hline
					\gate{TS}                    & (treasure stolen)  &                &        \\
					\leaf{p}                     & (police)           & \EUR{100}      & 10 min \\
					\gate{TF}                    & (thieves fleeing)  &                &        \\
					\gate{ST}                    & (steal treasure)   & 							 & 2 min  \\
					\leaf{b}                     & (bribe gatekeeper) & \EUR{500}      & 1 h    \\
					\leaf{f}                     & (force arm. door) 	& \EUR{100}      & 2 h    \\
					\gate{GA}                    & (get away)         &                &        \\
					\leaf{h}                     & (helicopter)       & \EUR{500}      & 3 min  \\
					\leaf{e}                     & (emergency exit)   &                & 10 min
				\end{tabular}
			}}
		\vspace{1ex}
		\subcaption{Attributes of nodes}
	\end{subfigure}
	\caption{Running example: treasure hunters}
	\label{fig:adt:treasure}
\end{figure}

\subsection{Translation to extended AMAS}
\label{sec:adt:adt2amas}

Asynchronous multi-agent systems (AMAS) \cite{AAMASWJWPPDAM2018a} are essentially networks of automata,
which synchronise on shared transitions and interleave private ones for asynchronous execution.
An extension of this formalism with attributes and conditional constraints to model \ADT/s,
and the translation of the latter to extended AMAS, were proposed in \cite{ICFEM2020}.
Intuitively, each node of the \ADT/ corresponds to a single automaton in the resulting network.
Specific patterns, embedding reductions to minimise state space explosion \cite{ICECCS2019},
are used for different types of \ADT/ constructs.
As the specifics exceed the scope and space of this paper, we refer the reader to \cite{AAMASWJWPPDAM2018a} for the AMAS semantics,
and to \cite{ICFEM2020} for the details on the translation.


In the multi-agent setting, groups of agents working for the attacking and defending parties can be considered.
Note that the \textit{feasibility} of an attack is not affected by the number or distribution of agents over \ADT/ nodes,
as opposed to some \textit{performance} metrics, such as time 
(\eg a lone agent can handle all the actions sequentially, albeit usually much slower).

\subsection{Assignment of agents for \ADT/s}
\label{sec:adt:sched}

Consequently, the optimal distribution of agent coalitions is of vital importance for both parties,
allowing them to prepare for multiple scenarios, depending on how many agents they can afford to recruit
(thereby delaying or speeding up the completion of the main goal).
For instance, the thieves in \Cref{fig:adt:treasure}, knowing the police response time, would have to plan accordingly
by bringing a sufficiently large team and, more importantly, schedule their tasks to make the most of these numbers.
Thus, we can formulate two relevant and non-trivial scheduling problems.
%
{\em The first one}, not directly addressed here, is obtaining the assignment using a given number of agents that results 
in optimal execution time.
{\em The second one}, on which we focus in this paper, is synthesising an assignment that achieves a particular execution 
time using the least possible number of agents.
Typically, the minimum possible time is of interest here.
As we show in \Cref{sec:preprocess}, this time can be computed from the structure of the input \ADT/ itself
(and, of course, the time attribute of nodes).
However, our approach can also target a longer attack time if desired.
In the next section, we discuss it in more detail as normalisation of the input tree is considered, 
along with several other preprocessing steps.

\section{Preprocessing the tree}
\label{sec:preprocess}

In this preprocessing step, an \ADT/ is transformed into
\DAG/s (\emph{Directed Acyclic Graphs}) of actions of the same duration.
This is achieved by splitting nodes into sequences of such actions, mimicking the
scheduling enforced by \ADT/s sequential gates, and considering the different possibilities
of defences.
Therefore, we introduce a sequential node \gateSEQ, which only waits for some input,
processes it and produces some output. It is depicted as a lozenge (see 
\Cref{fig:pre:treasure:1}).


In what follows, we assume that one
time unit is the greatest common factor of time durations across all nodes in the
input \ADT/, \ie{} $\tunit = \mathit{gcf}(t_{N_1} \dots t_{N_{|\ADT/|}})$.
By \textit{time slots}, we refer to fragments of the schedule whose length is $\tunit$.
That is, after normalisation, one agent can handle exactly one node of non-zero duration
within a single time slot.
%
Note that, during the preprocessing steps described in this section, node labels are
preserved to ensure backwards traceability. 
Their new versions are either primed or indexed.

\subsection{Nodes with no duration}
\label{sec:pre:zero}
It happens that several nodes have no time parameter set, and are thus considered to have a duration of $0$. 
Such nodes play essentially a structuring role.
Since they do not take any time, the following proposition is straightforward.

\begin{proposition}
	\label{prop:pre:zero}
	Nodes with duration $0$ can always be scheduled immediately before their parent node
	or after their last occurring child, using the same agent in the same time slot.
\end{proposition}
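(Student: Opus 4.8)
The plan is to argue directly from the observation that a node with duration $0$ occupies no time slot, so moving it around the schedule costs nothing. First I would set up the two cases explicitly. A node $N$ with $\duration(N)=0$ sits between its children (which must all be completed before $N$ produces its output) and its parent $P$ (which cannot start before $N$ finishes). Since $N$ needs $0$ time, the only real constraint is a precedence one: $N$ must occur no earlier than the latest finishing time of its children, and no later than the start of $P$. Any placement of $N$ in that half-open interval is schedule-valid; I would then pick the two extreme placements.

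For the first placement, I would schedule $N$ in the same time slot as $P$, immediately before $P$ is handled, and assign $N$ to the same agent that handles $P$. This is legal because: (i) by the time $P$'s slot begins, all of $P$'s predecessors — including $N$'s children, which are also ancestors-once-removed in the precedence order — have finished, so $N$'s children are done; (ii) $N$ takes $0$ time, so inserting it at the start of $P$'s slot does not push $P$ or anything else later; (iii) the agent handling $P$ is, by definition, free at the start of that slot, and processing $N$ consumes none of the slot. Symmetrically, for the second placement I would schedule $N$ in the slot of its last-occurring child $C$, right after $C$ completes, using $C$'s agent: $C$'s completion guarantees all of $N$'s children are done (the others finished no later than $C$), the zero duration means $N$ fits in the residual of $C$'s slot without delaying $P$, and $C$'s agent is available the instant $C$ finishes.

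I would then note that this reasoning is essentially what \Cref{prop:pre:zero} asserts, and that it also shows such zero-duration nodes can be \emph{contracted} during preprocessing without affecting any time metric — which is the use the paper makes of it. The one point requiring a little care, and the closest thing to an obstacle, is the bookkeeping when a chain of several zero-duration nodes stacks up, or when a zero-duration node has both a zero-duration parent and zero-duration children: one must check that the "same slot, same agent" collapsing is consistent and does not create a cyclic dependency among the collapsed nodes. This is handled by processing zero-duration nodes in (reverse) topological order, so that each is absorbed into an already-placed neighbour; since the underlying graph is a \DAG/, such an order exists and the induction goes through. Everything else is immediate from $\duration(N)=0$ and the precedence-only nature of the constraint, so I would keep the proof short.
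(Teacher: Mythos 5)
Your argument is correct and matches the paper's own reasoning, which treats the claim as an immediate consequence of the fact that a zero-duration node consumes no time and can therefore share the agent and slot of its parent or last-finishing child without violating precedence or agent availability. Your extra remark on collapsing chains of zero-duration nodes in topological order is exactly the bookkeeping the paper delegates to its \textsc{ZeroAssign} procedure, so nothing essential differs.
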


Preprocessing introduces nodes similar to \gateSEQ{} but with 0~duration, called \gateNULL and depicted as trapeziums (Fig.~\ref{fig:pre:treasure:sched}).

\subsection{Normalising time}
\label{sec:pre:norm}

The first preprocessing step prior to applying the scheduling algorithm
normalises the time parameter of nodes.

\begin{proposition}
	\label{prop:pre:time}
	Any node $N$ of duration $t_N = n \times \tunit, n\neq0$ can be
	replaced with an equivalent sequence consisting of a node $N'$ 
	(differing from $N$ only in its $0$ duration) and $n$ \gateSEQ nodes $N_1$,
	\ldots, $N_n$ of duration $\tunit$.
\end{proposition}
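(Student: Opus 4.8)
The plan is to show that the scheduling behaviour induced by a node $N$ of duration $t_N = n \times \tunit$ coincides exactly with that of the replacement sequence $N' \to N_1 \to \cdots \to N_n$, so that any schedule for one can be transformed into a schedule for the other with the same completion time and the same agent count. I would argue by exhibiting a bijection between schedules. Given a schedule in which a single agent processes $N$ over $n$ consecutive time slots $[\tau, \tau+n\cdot\tunit)$, I map it to a schedule in which the same agent processes $N_1, \ldots, N_n$ in the slots $[\tau, \tau+\tunit), \ldots, [\tau+(n-1)\tunit, \tau+n\cdot\tunit)$ respectively, and processes $N'$ (which has duration $0$) in the slot immediately after $N_n$, i.e.\ at time $\tau + n\cdot\tunit$ — this is legitimate by \Cref{prop:pre:zero}, since $N'$ is a $0$-duration node placed right after its last occurring child $N_n$. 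Conversely, any schedule of the sequence can be collapsed: the chain $N_1 \to \cdots \to N_n$ forces the $N_i$ to be executed in order, each taking one slot, and although in principle different agents could handle different $N_i$, this never shortens the makespan (the precedence chain serialises them) and never reduces the agent count, so without loss of generality a single agent handles the whole chain, after which $N'$ is absorbed into the slot of $N_n$'s parent, again by \Cref{prop:pre:zero}.

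The key steps, in order, are: (i) spell out precisely what ``equivalent'' means here — same set of admissible makespans and, for each makespan, the same minimal number of agents, together with the preservation of the input/output interface so the sequence can be substituted for $N$ inside a larger \DAG/; (ii) observe that the $\gateSEQ$ semantics (wait for input, process for $\tunit$, emit output) makes $N_1, \ldots, N_n$ a precedence chain whose only feasible execution is $n$ back-to-back slots, matching the $n\cdot\tunit$ duration of $N$ exactly; (iii) verify that the two extra structural nodes do not change resource usage: $N'$ has duration $0$ and is handled by \Cref{prop:pre:zero}, while the $\gateSEQ$ chain can always be assigned to one agent without loss; (iv) check the interface, namely that $N'$ consumes $N$'s inputs and $N_n$ produces $N$'s output (or the mirror convention, depending on orientation), so that parents and children of $N$ see no difference.

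The main obstacle I expect is step (iii) in the direction from sequence to original node: one must rule out the possibility that splitting $N$ into $n$ separate unit tasks gives the scheduler \emph{more} freedom — e.g.\ that parallelising pieces of the chain, or interleaving other work between the $N_i$, could ever beat the monolithic node or use agents more cleverly. The resolution is that the precedence edges $N_i \to N_{i+1}$ are strict, so the $n$ pieces occupy $n$ distinct consecutive slots on whichever agent(s) handle them; handing them to several agents only wastes agents (each still idles while waiting for its predecessor), and handing them to one agent reproduces exactly the block occupied by $N$. A secondary, more bookkeeping-level subtlety is the placement of the $0$-duration node $N'$: \Cref{prop:pre:zero} gives two options (just before the parent, or just after the last child), and one should pick the orientation consistently with how $\gateSEQ$'s input/output ports are wired so that the substitution is local and the surrounding schedule is untouched.
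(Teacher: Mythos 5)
Your argument is correct and is essentially the intended one: the paper itself gives no proof for this proposition in this version (it is treated as a straightforward preprocessing fact, with details deferred to the conference version), and the substance — the \gateSEQ{} chain serialises into $n$ unit slots totalling exactly $t_N$, the $0$-duration node $N'$ is absorbed via \Cref{prop:pre:zero}, and the input/output interface of $N$ is preserved — is exactly what is needed. One imprecision is worth correcting, though: the precedence edges force the $N_i$ to occupy $n$ \emph{distinct, ordered} slots, not necessarily \emph{consecutive} ones, and different agents may indeed handle different $N_i$; the paper deliberately admits such interrupted and collaborative schedules (see the discussion around \textsc{ReshuffleSlot} and the \emph{interrupted} example), so there is no bijection between schedules of $N$ and schedules of the chain — the chain strictly enlarges the set of admissible schedules. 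What survives, and what the proposition actually needs, is the weaker equivalence you state in your step (i): the earliest possible completion of the chain is still $n\cdot\tunit$ after it is enabled, and any chain schedule can be reshuffled onto a single agent without increasing the makespan or the number of agents (this is precisely what \Cref{prop:algo:reshuffle} later guarantees), so achievable attack times and minimal agent counts coincide. Phrase the reverse direction that way rather than as a schedule-for-schedule correspondence and the proof is complete.
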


\subsection{Scheduling enforcement}
\label{sec:pre:sched}
\gateSAND nodes 
enforce some scheduling, 
and are transformed into a sequence containing their subtrees and \gateNULL nodes.

\begin{proposition}
	\label{prop:pre:sand}
	Any \gateSAND node $N$ with children subtrees $T_1$, \ldots, $T_n$ can be replaced
	with an equivalent sequence $T_1$, $N_1$, $T_2$, \ldots, $N_{n-1}$, $T_n$, $N_n$, where
	each $N_i$ is a \gateNULL node, its input is the output of $T_i$ and its outputs
	are the leaves of $T_{i+1}$ (except for $N_n$ which has the same output as $N$ if
	any).
\end{proposition}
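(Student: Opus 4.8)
The plan is to show the equivalence by a direct semantic argument: the schedules admissible for the \gateSAND node~$N$ with subtrees $T_1, \ldots, T_n$ are exactly those admissible for the proposed sequence $T_1, N_1, T_2, \ldots, N_{n-1}, T_n, N_n$. First I would recall the intended meaning of a \gateSAND gate: it succeeds precisely when all of $T_1, \ldots, T_n$ succeed \emph{and} the completion time of $T_i$ precedes the start of any action in $T_{i+1}$, for each $i < n$. I would then unfold the semantics of the \gateNULL nodes from \Cref{sec:pre:zero}: each $N_i$ is a zero-duration node whose single input is the output (root) of $T_i$ and whose outputs feed the leaves of $T_{i+1}$, so by \Cref{prop:pre:zero} it can sit in the same time slot as either. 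The crux is that routing $N_i$'s input from $T_i$'s output and its outputs to the leaves of $T_{i+1}$ creates a precedence edge forcing every leaf of $T_{i+1}$ to be scheduled no earlier than the slot in which $T_i$ completes, which is exactly the sequentiality constraint of \gateSAND.

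The argument then splits into two inclusions. For the forward direction, given a valid schedule of $N$ (seen as a \gateSAND gate), I would build a schedule of the sequence by keeping the schedule of each $T_i$ unchanged and placing each \gateNULL node $N_i$ in the time slot immediately after $T_i$ finishes (legitimate by \Cref{prop:pre:zero}, since $N_i$ has duration~$0$); the \gateSAND ordering guarantees that all leaves of $T_{i+1}$ indeed start at or after that slot, so the new precedence edges through $N_i$ are respected, and $N_n$ inherits $N$'s output unchanged. For the reverse direction, any valid schedule of the sequence must, by the precedence edges, complete $T_i$ before starting any leaf of $T_{i+1}$; dropping the $N_i$ nodes (each of which occupies no time) yields a schedule in which $T_1, \ldots, T_n$ are executed in the order required by \gateSAND, and the output condition at $N_n$ coincides with that at $N$. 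Hence the two are interchangeable, both in terms of feasibility and in terms of the set of achievable completion times, so in particular they yield the same optimal attack time and the same agent requirements.

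I would also note the edge cases for completeness: when $n = 1$ the sequence degenerates to $T_1, N_1$ and the claim is immediate; and the handling of $N_n$'s output (``the same output as $N$ if any'') covers the situation where $N$ is the root and has no parent, in which case $N_n$ is simply a terminal \gateNULL node. The main obstacle I anticipate is not the logical skeleton but pinning down precisely what ``equivalent sequence'' means here --- i.e.\ making explicit that equivalence is with respect to the scheduling semantics (feasible schedules and their lengths) rather than mere logical success --- and verifying that threading $N_i$ between the \emph{output} of $T_i$ and the \emph{leaves} (rather than the root) of $T_{i+1}$ faithfully reproduces the \gateSAND constraint without introducing spurious orderings among the internal nodes of $T_{i+1}$. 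Once that is stated carefully, the zero-duration bookkeeping follows directly from Propositions~\ref{prop:pre:zero} and~\ref{prop:pre:time}.
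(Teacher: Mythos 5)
Your plan is correct and matches the intended justification: the paper gives no in-text proof of this proposition (it is deferred to the extended version \cite{ICECCS2022}), and the argument there is exactly this two-way scheduling equivalence, where the zero-duration \gateNULL nodes $N_i$ introduce precisely the precedence constraints of \gateSAND while, by \Cref{prop:pre:zero}, adding no time and no extra agent requirements. Your explicit attention to what ``equivalent'' means (preservation of feasible schedules and their lengths, hence of optimal time and agent counts) and to why wiring $N_i$ from the output of $T_i$ to the leaves of $T_{i+1}$ suffices covers the only delicate points.
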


\subsection{Handling defences}
\label{sec:pre:def}
The scheduling we are seeking to obtain will guarantee that the necessary attacks are performed.
Hence, when dealing with defence nodes, we can assume that all attacks are successful.
However, they may not be mandatory, in which case they should be avoided so as to
obtain a better scheduling of agents.

Taking into account each possible choice of defences will lead to as many \DAG/s representing
the attacks to be performed. This allows for answering the question: ``What is the
minimal schedule of attackers if these defences are operating?''



\emph{Composite defences.} Defences resulting from an \gateAND{}, \gateSAND
or \gateOR
between several defences are operating according to the success of their subtrees:
for \gateAND{} and \gateSAND{}, all subtrees should be operating, while only one is
necessary for \gateOR{}. This can easily be computed by a boolean bottom-up labelling
of nodes. Note that different choices of elementary defences can lead to disabling
the same higher-level composite defence, thus limiting the number of \DAG/s that will
need to be considered.

\emph{No Defence nodes (\gateNODEF).} 
A \gateNODEF succeeds if its attack succeeds or its defence fails. 
Hence, if the defence is not operating, the attack is not necessary.
Thus, the \gateNODEF{} node can be replaced by a \gateNULL{} node without children,
and the children subtrees 
deleted.
On the contrary, if the defence is operating, the attack must take place. The defence
subtree is deleted, while the attack one is kept, and the \gateNODEF{} node can be
replaced by a \gateNULL{} node, as 
depicted in \Cref{fig:pre:nodef}.

\emph{Counter Defence (\gateCAND{}) and Failed Reactive Defence (\gateSCAND
	{}) nodes.} A \gateCAND{} succeeds if its attack is successful and its defence is
not. A \gateSCAND{} additionally specifies that the defence
takes place after the attack. In both cases, if the defence is not operating, its
subtree is deleted, while the attack one is kept, and the \gateCAND{} (or \gateSCAND{})
node can be replaced by a \gateNULL{} node, as in \Cref{fig:pre:nodef:dok}.
Otherwise, the \gateCAND{} (or \gateSCAND{}) node is deleted, as well as its subtrees.
Moreover, it transmits its failure recursively to its parents, until a choice of another
branch is possible. Thus, all ancestors are deleted bottom up until an \gateOR is
reached.


Thus, we have a set of \DAG/s with attack nodes only.

\begin{figure}[!!htb]
\centering
\begingroup
\def\nodesubtree{\node[isosceles triangle, isosceles triangle apex angle=70, shape border rotate=90, minimum size=6.5mm]}

\begin{subfigure}[b]{0.3\linewidth}
	\centering
	\scalebox{0.75}{
		\begin{tikzpicture}
			[every node/.style={ultra thick,draw=red,minimum size=6mm},
				node distance=1cm]
			\normalsize
			\node[or gate US,point up,logic gate inputs=ni] (A)
			{\rotatebox{-90}{$A$}};
			\nodesubtree             at (-1,-1.28) (a) {$a$};
			\nodesubtree[draw=Green] at ( 1,-1.24) (d) {$d$};
			\draw (a.north) -- ([yshift=1.4mm]a.north) -| (A.input 1);
			\draw (d.north) -- ([yshift=0.6mm]d.north) -| (A.input 2);
		\end{tikzpicture}
	}
	\subcaption{\gateNODEF{} node\label{fig:pre:nodef:gate}}
\end{subfigure}
\begin{subfigure}[b]{0.3\linewidth}
	\centering
	\scalebox{0.75}{
		\begin{tikzpicture}
			[every node/.style={ultra thick,draw=red,minimum size=6mm},
				node distance=1cm]
			\normalsize
			\node[draw=none] at (0,0) () {};
			\node[gateNULL] at (0,1.28) (A') {$A'$};
		\end{tikzpicture}
	}
	\subcaption{Case $d$ fails\label{fig:pre:nodef:dnok}}
\end{subfigure}
\begin{subfigure}[b]{0.3\linewidth}
	\centering
	\scalebox{0.75}{
		\begin{tikzpicture}
			[every node/.style={ultra thick,draw=red,minimum size=6mm},
				node distance=1cm]
			\normalsize
			\node[gateNULL] (A') {$A'$};
			\nodesubtree             at (0,-1.28) (a) {$a$};
			\draw (a.north) -- (A');
		\end{tikzpicture}
	}
	\subcaption{Case $d$ operates\label{fig:pre:nodef:dok}}
\end{subfigure}
\endgroup
\caption{Handling \gateNODEF $A$\label{fig:pre:nodef}}
\end{figure}
\subsection{Handling OR branches}
\label{sec:pre:or}
\gateOR nodes give the choice between several series of actions,
only one of which will be chosen in an optimal assignment of events.
However, one cannot simply keep the shortest branch of an \gateOR node and prune all others.
Doing so minimises attack time, but not necessarily the number of agents.
In particular, a slightly longer, but narrower branch may require fewer agents without increasing attack time,
provided there is a longer sequence elsewhere in the \DAG/.
Consequently, only branches that are guaranteed not to lead to an optimal assignment can be pruned,
which is the case when a branch is the longest one in the entire graph.
All other cases need to be investigated, leading to multiple variants depending on
the \gateOR branch executed, similar to the approach for defence nodes.

\subsection{Preprocessing the treasure hunters \ADT/}
\label{sec:pre:ex}

\Cref{fig:pre:treasure:1,fig:pre:treasure:2} detail the preprocessing of the treasure
hunters example step by step. The time unit is one minute. Long sequences of \gateSEQ
{} are shortened with dotted lines.
Note that when handling the defence, at step 3, we should obtain two \DAG/s corresponding
to the case where the defence fails (see \Cref{fig:pre:treasure:def}), or where the
defence is successful. This latter case leads to an empty \DAG/ where no attack can
succeed. Therefore, we can immediately conclude that if the police is successful,
there is no scheduling of agents.

\begin{figure}[!!htb]
	\captionsetup[subfigure]{justification=centering}
	\centering
	
	\begin{subfigure}[b]{\linewidth}
		\centering
		\scalebox{0.5}{
		\begin{tikzpicture}[every node/.style={ultra thick,draw=red,minimum size=6mm},
				node distance=1.5cm]
			\node[and gate US,point up,logic gate inputs=ni] (ts)
			{\rotatebox{-90}{\gate{TS'}}};
			\node[gateSEQ,draw=Green,minimum size=8mm,
				below = 5mm of ts.west, xshift=22mm]
			(p10) {\leaf{p_{10}}};
			\node[gateSEQ,draw=Green,minimum size=8mm,below of= p10]
			(p1) {\leaf{p_{1}}};
			\node[rectangle,draw=Green,minimum size=8mm,below of= p1]
			(p) {\leaf{p'}};
			\draw (p10.north) -- ([yshift=0.28cm]p10.north) -| (ts.input 2);
			\draw[dotted] (p10) --(p1);
			\draw (p) -- (p1);
			\node[and gate US,point up,logic gate inputs=nn, seq=4pt,
				below = 9mm of ts.west, yshift=28mm] (tf)
			{\rotatebox{-90}{\gate{TF'}}};
			\draw (tf.east) -- ([yshift=0.15cm]tf.east) -| (ts.input 1);
			\node[gateSEQ,minimum size=8mm,
				left of = p1, node distance=6.7cm]
			(st2) {\leaf{ST_{2}}};
			\node[gateSEQ,minimum size=8mm,below of= st2]
			(st1) {\leaf{ST_{1}}};
			\node[and gate US,point up,logic gate inputs=nn,
				left of= st1]
			(st) {\rotatebox{-90}{\gate{ST'}}};
			\draw (st2.north) -- ([yshift=0.15cm]st2.north) -| (tf.input 1);
			\draw (st2) -- (st1) -- (st.east);
			\node[gateSEQ, below = 4mm of st.west, xshift=-1.2cm]
			(b60) {\leaf{b_{60}}};
			\node[gateSEQ,minimum size=8mm,below of= b60]
			(b1) {\leaf{b_{1}}};
			\node[state,minimum size=8mm,below of= b1]
			(b) {\leaf{b'}};
			\draw (b60.north) -- ([yshift=0.15cm]b60.north) -| (st.input 1);
			\draw[dotted] (b60) --(b1);
			\draw (b) -- (b1);
			\node[gateSEQ, below = 4mm of st.west, xshift=1.2cm]
			(f120) {\leaf{f_{120}}};
			\node[gateSEQ,minimum size=8mm,below of= f120]
			(f1) {\leaf{f_{1}}};
			\node[state,minimum size=8mm,below of= f1]
			(f) {\leaf{f'}};
			\draw (f120.north) -- ([yshift=0.15cm]f120.north) -| (st.input 2);
			\draw[dotted] (f120) --(f1);
			\draw (f) -- (f1);
			\node[or gate US,point up,logic gate inputs=nn,
				above of = p1, node distance=2.5cm]
			(ga) {\rotatebox{-90}{\gate{GA'}}};
			\draw (ga.east) -- ([yshift=0.15cm]ga.east) -| (tf.input 2);
			\node[gateSEQ, below = 4mm of ga.west, xshift=-1.2cm]
			(h3) {\leaf{h_{3}}};
			\node[gateSEQ,minimum size=8mm,below of= h3]
			(h2) {\leaf{h_{2}}};
			\node[gateSEQ,minimum size=8mm,below of= h2]
			(h1) {\leaf{h_{1}}};
			\node[state,minimum size=8mm,below of= h1]
			(h) {\leaf{h'}};
			\draw (h3.north) -- ([yshift=0.15cm]h3.north) -| (ga.input 1);
			\draw (h3) -- (h2) -- (h1) -- (h);
			\node[gateSEQ, below = 4mm of ga.west, xshift=1.2cm]
			(e10) {\leaf{e_{10}}};
			\node[gateSEQ,minimum size=8mm,below of= e10]
			(e1) {\leaf{e_{1}}};
			\node[state,minimum size=8mm,below of= e1]
			(e) {\leaf{e'}};
			\draw (e10.north) -- ([yshift=0.15cm]e10.north) -| (ga.input 2);
			\draw[dotted] (e10) --(e1);
			\draw (e) -- (e1);
		\end{tikzpicture}
	}
		\subcaption{time normalisation}
		\label{fig:pre:treasure:1}
	\end{subfigure}
	
	\begin{subfigure}[b]{0.35\linewidth}
		\centering
		\scalebox{0.5}{
			\begin{tikzpicture}[every node/.style={ultra thick,draw=red,minimum size=6mm},
					node distance=1.5cm]
				\node[and gate US,point up,logic gate inputs=ni] (ts)
				{\rotatebox{-90}{\gate{TS'}}};
				\node[gateSEQ,draw=Green,minimum size=8mm,
					below = 5mm of ts.west, xshift=10mm]
				(p10) {\leaf{p_{10}}};
				\node[gateSEQ,draw=Green,minimum size=8mm,below of= p10]
				(p1) {\leaf{p_{1}}};
				\node[rectangle,draw=Green,minimum size=8mm,below of= p1]
				(p) {\leaf{p'}};
				\draw (p10.north) -- ([yshift=0.28cm]p10.north) -| (ts.input 2);
				\draw[dotted] (p10) --(p1);
				\draw (p) -- (p1);
				\node[gateNULL,minimum size=8mm,yshift=1.7mm,
					left of= p10, node distance=2.5cm] (tf2)
				{\gate{TF'_2}};
				\draw (tf2.north) -- ([yshift=0.25cm]tf2.north) -| (ts.input 1);
				\node[or gate US,point up,logic gate inputs=nn,
					above of = p1, node distance=2.5cm]
				(ga) {\rotatebox{-90}{\gate{GA'}}};
				\draw (ga.east) -- ([yshift=0.15cm]ga.east) -| (tf2);
				\node[gateSEQ, below = 4mm of ga.west, xshift=-1.2cm]
				(h3) {\leaf{h_{3}}};
				\node[gateSEQ,minimum size=8mm,below of= h3]
				(h2) {\leaf{h_{2}}};
				\node[gateSEQ,minimum size=8mm,below of= h2]
				(h1) {\leaf{h_{1}}};
				\node[state,minimum size=8mm,below of= h1]
				(h) {\leaf{h'}};
				\draw (h3.north) -- ([yshift=0.15cm]h3.north) -| (ga.input 1);
				\draw (h3) -- (h2) -- (h1) -- (h);
				\node[gateSEQ, below = 4mm of ga.west, xshift=1.2cm]
				(e10) {\leaf{e_{10}}};
				\node[gateSEQ,minimum size=8mm,below of= e10]
				(e1) {\leaf{e_{1}}};
				\node[state,minimum size=8mm,below of= e1]
				(e) {\leaf{e'}};
				\draw (e10.north) -- ([yshift=0.15cm]e10.north) -| (ga.input 2);
				\draw[dotted] (e10) --(e1);
				\draw (e) -- (e1);
				\node[gateNULL,minimum size=8mm,xshift=1.2cm,
					below of= h]
				(tf1) {\gate{TF'_1}};
				\draw (tf1) -- (h);
				\draw (tf1) -- (e);
				\node[gateSEQ,minimum size=8mm,below of = tf1]
				(st2) {\leaf{ST_{2}}};
				\node[gateSEQ,minimum size=8mm,below of= st2]
				(st1) {\leaf{ST_{1}}};
				\node[and gate US,point up,logic gate inputs=nn,
					left of= st1]
				(st) {\rotatebox{-90}{\gate{ST'}}};
				\draw (st2) -- (tf1);
				\draw (st2) -- (st1) -- (st.east);
				\node[gateSEQ, below = 4mm of st.west, xshift=-1.2cm]
				(b60) {\leaf{b_{60}}};
				\node[gateSEQ,minimum size=8mm,below of= b60]
				(b1) {\leaf{b_{1}}};
				\node[state,minimum size=8mm,below of= b1]
				(b) {\leaf{b'}};
				\draw (b60.north) -- ([yshift=0.15cm]b60.north) -| (st.input 1);
				\draw[dotted] (b60) --(b1);
				\draw (b) -- (b1);
				\node[gateSEQ, below = 4mm of st.west, xshift=1.2cm]
				(f120) {\leaf{f_{120}}};
				\node[gateSEQ,minimum size=8mm,below of= f120]
				(f1) {\leaf{f_{1}}};
				\node[state,minimum size=8mm,below of= f1]
				(f) {\leaf{f'}};
				\draw (f120.north) -- ([yshift=0.15cm]f120.north) -| (st.input 2);
				\draw[dotted] (f120) --(f1);
				\draw (f) -- (f1);
			\end{tikzpicture}
		}
		\subcaption{Scheduling\\enforcement}
		\label{fig:pre:treasure:sched}
	\end{subfigure}
	\hfill
	\begin{subfigure}[b]{0.25\linewidth}
		\centering
		\scalebox{0.5}{
			\begin{tikzpicture}[every node/.style={ultra thick,draw=red,minimum size=6mm},
					node distance=1.5cm]
				\node[gateNULL,minimum size=8mm]
				(ts) {\gate{TS'}};
				\node[gateNULL,minimum size=8mm,below of= ts]
				(tf2) {\gate{TF'_2}};
				\draw (tf2) -- (ts);
				\node[or gate US,point up,logic gate inputs=nn,
					left of = tf2]
				(ga) {\rotatebox{-90}{\gate{GA'}}};
				\draw (ga.east) -- ([yshift=0.15cm]ga.east) -| (tf2);
				\node[gateSEQ, below = 4mm of ga.west, xshift=-1.2cm]
				(h3) {\leaf{h_{3}}};
				\node[gateSEQ,minimum size=8mm,below of= h3]
				(h2) {\leaf{h_{2}}};
				\node[gateSEQ,minimum size=8mm,below of= h2]
				(h1) {\leaf{h_{1}}};
				\node[state,minimum size=8mm,below of= h1]
				(h) {\leaf{h'}};
				\draw (h3.north) -- ([yshift=0.15cm]h3.north) -| (ga.input 1);
				\draw (h3) -- (h2) -- (h1) -- (h);
				\node[gateSEQ, below = 4mm of ga.west, xshift=1.2cm]
				(e10) {\leaf{e_{10}}};
				\node[gateSEQ,minimum size=8mm,below of= e10]
				(e1) {\leaf{e_{1}}};
				\node[state,minimum size=8mm,below of= e1]
				(e) {\leaf{e'}};
				\draw (e10.north) -- ([yshift=0.15cm]e10.north) -| (ga.input 2);
				\draw[dotted] (e10) --(e1);
				\draw (e) -- (e1);
				\node[gateNULL,minimum size=8mm,xshift=1.2cm,
					below of= h]
				(tf1) {\gate{TF'_1}};
				\draw (tf1) -- (h);
				\draw (tf1) -- (e);
				\node[gateSEQ,minimum size=8mm,below of = tf1]
				(st2) {\leaf{ST_{2}}};
				\node[gateSEQ,minimum size=8mm,below of= st2]
				(st1) {\leaf{ST_{1}}};
				\node[and gate US,point up,logic gate inputs=nn,
					left of= st1]
				(st) {\rotatebox{-90}{\gate{ST'}}};
				\draw (st2) -- (tf1);
				\draw (st2) -- (st1) -- (st.east);
				\node[gateSEQ, below = 4mm of st.west, xshift=-1.2cm]
				(b60) {\leaf{b_{60}}};
				\node[gateSEQ,minimum size=8mm,below of= b60]
				(b1) {\leaf{b_{1}}};
				\node[state,minimum size=8mm,below of= b1]
				(b) {\leaf{b'}};
				\draw (b60.north) -- ([yshift=0.15cm]b60.north) -| (st.input 1);
				\draw[dotted] (b60) --(b1);
				\draw (b) -- (b1);
				\node[gateSEQ, below = 4mm of st.west, xshift=1.2cm]
				(f120) {\leaf{f_{120}}};
				\node[gateSEQ,minimum size=8mm,below of= f120]
				(f1) {\leaf{f_{1}}};
				\node[state,minimum size=8mm,below of= f1]
				(f) {\leaf{f'}};
				\draw (f120.north) -- ([yshift=0.15cm]f120.north) -| (st.input 2);
				\draw[dotted] (f120) --(f1);
				\draw (f) -- (f1);
			\end{tikzpicture}
		}
		\subcaption{Handling\\failed defence}
		\label{fig:pre:treasure:def}
	\end{subfigure}
	\hfill
		\begin{subfigure}[b]{0.35\linewidth}
		\centering
		\scalebox{0.5}{
			\begin{tikzpicture}[every node/.style={ultra thick,
							draw=red,minimum size=6mm},
					node distance=1.5cm]
				\node[gateNULL,minimum size=8mm,
					depth label=125,
					label={[draw=none,green!60!black]120:depth},
					level label=0,
					label={[draw=none,blue]60:level}]
				(ts) {\gate{TS'}};
				\node[gateNULL,minimum size=8mm,below of= ts,
					depth label=125,level label=0]
				(tf2) {\gate{TF'_2}};
				\draw (tf2) -- (ts);
				\node[or gate US,point up,logic gate inputs=nn,
					left of = tf2]
				(ga) {\rotatebox{-90}{\gate{GA'}}};
				\node[draw=none,green!60!black,xshift=-0.7cm]
				at (ga.center) (){125};
				\node[draw=none,blue,xshift=0.6cm]
				at (ga.center) (){0};
				\draw (ga.east) -- ([yshift=0.15cm]ga.east) -| (tf2);
				\node[gateSEQ, below = of ga.center,yshift=0.5cm,
					depth label=125,level label=0]
				(h3) {\leaf{h_{3}}};
				\node[gateSEQ,minimum size=8mm,below of= h3,
					depth label=124,level label=1]
				(h2) {\leaf{h_{2}}};
				\node[gateSEQ,minimum size=8mm,below of= h2,
					depth label=123,level label=2]
				(h1) {\leaf{h_{1}}};
				\node[state,minimum size=8mm,below of= h1,
					depth label=122,level label=3]
				(h) {\leaf{h'}};
				\draw (h3.north) -- (ga);
				\draw (h3) -- (h2) -- (h1) -- (h);
				\node[gateNULL,minimum size=8mm,below of= h,
					depth label=122,level label=3]
				(tf1) {\gate{TF'_1}};
				\draw (tf1) -- (h);
				\node[gateSEQ,minimum size=8mm,below of = tf1,
					depth label=122,level label=3]
				(st2) {\leaf{ST_{2}}};
				\node[gateSEQ,minimum size=8mm,below of= st2,
					depth label=121,level label=4]
				(st1) {\leaf{ST_{1}}};
				\node[and gate US,point up,logic gate inputs=nn,
					left of= st1]
				(st) {\rotatebox{-90}{\gate{ST'}}};
				\node[draw=none,green!60!black,xshift=-0.8cm]
				at (st.center) (){120};
				\node[draw=none,blue,xshift=0.6cm]
				at (st.center) (){5};
				\draw (st2) -- (tf1);
				\draw (st2) -- (st1) -- (st.east);
				\node[gateSEQ, below = 4mm of st.west,xshift=-1.2cm,
					depth label=60,level label=5]
				(b60) {\leaf{b_{60}}};
				\node[gateSEQ,minimum size=8mm,below of= b60,
					depth label=1,level label=64]
				(b1) {\leaf{b_{1}}};
				\node[state,minimum size=8mm,below of= b1,
					depth label=0,level label=65]
				(b) {\leaf{b'}};
				\draw (b60.north) -- ([yshift=0.15cm]b60.north) -|
				(st.input 1);
				\draw[dotted] (b60) --(b1);
				\draw (b) -- (b1);
				\node[gateSEQ, below = 4mm of st.west,xshift=1.2cm,
					depth label=120,level label=5]
				(f120) {\leaf{f_{120}}};
				\node[gateSEQ,minimum size=8mm,below of= f120,
					depth label=1,level label=124]
				(f1) {\leaf{f_{1}}};
				\node[state,minimum size=8mm,below of= f1,
					depth label=0,level label=125]
				(f) {\leaf{f'}};
				\draw (f120.north) -- ([yshift=0.15cm]f120.north) -|
				(st.input 2);
				\draw[dotted] (f120) --(f1);
				\draw (f) -- (f1);
			\end{tikzpicture}
		}
		\subcaption{Handling \gateOR node, computing depth/level}%
		\label{fig:sched:pruneNlabel}
	\end{subfigure}

	\caption{Treasure hunters \ADT/: preprocessing steps (top, left, middle) and initial part of the main algorithm (bottom right)\label{fig:pre:treasure:2}}
\end{figure}

\section{Best minimal agent assignment}
\label{sec:algo}






At this stage, we have \DAG/s where nodes are either (i) a leaf, or of type \gateAND
{}, \gateOR
{}, or \gateNULL, all with duration $0$ or (ii) of type \gateSEQ{} with duration
$\tunit{}$. Their branches mimic the possible runs in the system.

The algorithm's input is a set of \DAG/s preprocessed as described in \Cref{sec:preprocess},
corresponding to possible configurations of defence nodes' outcomes and choices of \gateOR branches in the original \ADT/.
%
For each of these \DAG/s, $n$ denotes the number of \gateSEQ{} nodes (all other ones have 0-duration).
Furthermore, nodes (denoted by \N) have some attributes: their $\type$; four integers
$\depth$, $\level$, $\agent$ and $\slot$, initially with value 0.
The values of $\depth$ and $\level$ denote, respectively,
the height of a node's tallest subtree and the distance from the root
(both without counting the zero duration nodes). The attributes
$\agent$ and $\slot$ store the node's assignment in the schedule.

\subsection{Depth and level of nodes}
\label{sec:algo:depthlevel}

We first compute the nodes' depth and level, handled by procedures $\textsc{DepthNode}$ and $\textsc{LevelNode}$, respectively.
They explore the \DAG/ in a DFS (\emph{depth first search}) manner, starting from the root.
Both attributes are assigned recursively, with $\depth$ computed during backtracking, \ie starting from the leaves.
There are slight differences in the way specific node types are handled; we refer the reader to~\cite{ICECCS2022} for the details.

\subsection{Number of agents: upper and lower bounds}
\label{sec:algo:bounds}

The upper bound on the number of agents is obtained from the maximal width of the preprocessed \DAG/,
\ie the maximal number of \gateSEQ nodes assigned the same value of \textit{level}.
These nodes must be executed in parallel to guarantee that the attack is achieved in the minimal time.


The minimal attack time is obtained from the number of levels $l$ in the preprocessed \DAG/.
Note that the longest path from the root to a leaf has exactly $l$ nodes of non-zero duration.
Clearly, none of these nodes can be executed in parallel,
therefore the number of time slots cannot be smaller than $l$.
%
Thus, if an optimal schedule of $l\times\tunit{}$ is realisable,
the $n$ nodes must fit in a schedule containing $l$ time slots.
Hence, the lower bound on the number of agents is  $\lceil\frac{n}{l}\rceil$.
There is, however, no guarantee that it can be achieved, and introducing additional agents may be necessary
depending on the \DAG/ structure, \eg if there are many parallel leaves.

\subsection{Minimal schedule}
\label{sec:algo:algo}

%

The algorithm for obtaining a schedule with the minimal attack time and also minimising
the number of agents is given in Alg.~\ref{algo:minsched}.
Input \DAG/s are processed sequentially and a schedule is computed for each one.
Not restricting the output to the overall minimum allows to avoid ``no attack'' scenarios
where the time is 0 (\eg following a defence failure on a root \gateNODEF node).
Furthermore, with information on the distribution of agents for a successful minimal time attack in all cases of defences,
the defender is able to decide which defences to enable according to these results.

\begin{algorithm}[t]
	\caption{\textsc{MinSchedule}($\DAGset$)}
	\label{algo:minsched}
	$\varOutput = \emptyset$

	\While{$\DAGset \neq \emptyset$}{
		Pick $\inputDAG \in \DAGset$

		\lIf(\Comment*[f]{Skip empty \DAG/s}){$\inputDAG.n=0$}{
		\textbf{continue}}
		$\textsc{DepthNode}(root(\inputDAG))$ \Comment*{Compute depth of nodes}
		$\inputDAG \gets \inputDAG\setminus\{\N \ |\ \neg
			\N.\keep\}$\label{line:minsched:delete}

		$\textsc{LevelNode}(root(\inputDAG),0)$ \Comment*{Compute level of nodes}

		$\varSlots \gets root(\inputDAG).\depth$\label{line:minsched:numslots}

	  $\varBound \gets \lceil\frac{\inputDAG.n}{\varSlots}\rceil - 1$

		$\varMaxAgents \gets \max_j(|\{\N: \N.\type=\gateSEQ \land \N.\level=j\}|)$ \Comment*{Max. level width (concur. \gateSEQ nodes)}

		$\varUpperBound \gets \varMaxAgents$

		$\varCurrentOutput = \emptyset$

		\While{$(\varUpperBound-\varBound>1)$\label{line:minsched:newloop}}{

			$\varNumAgents \gets \varBound + \lfloor\frac{\varUpperBound-\varBound}{2}\rfloor$\label{line:minsched:adjustagents}

			$(\varCandidate,\varNodesLeft) \gets \textsc{Schedule}(\inputDAG,\varSlots,\varNumAgents)$\label{line:minsched:candidate}

		  \If(\Comment*[f]{Candidate schedule OK}){$\varNodesLeft=0$\label{line:minsched:discard}}{

				$\varUpperBound \gets \varNumAgents$

				$\varCurrentOutput \gets \varCandidate$}

			\lElse(\Comment*[f]{Cand. schedule not OK}){$\varBound=\varNumAgents$}

		} 

		\If{$\varUpperBound = \varMaxAgents$}{
			$ (\varCurrentOutput, \_) \gets \textsc{Schedule}(\inputDAG,\varSlots,\varMaxAgents)$
		}\label{line:minsched:minagents}

		$\textsc{ZeroAssign}(\inputDAG)$\label{line:minsched:zero}

		$\varOutput \gets \varOutput \cup \varCurrentOutput$

		$\DAGset \gets \DAGset \setminus \inputDAG$
	}
	\Return $\varOutput$
\end{algorithm}


The actual computation of the schedule is handled by the function \textsc{Schedule} (Alg.~\ref{algo:schedCandidate}).
Starting from the root and going top-down, all \gateSEQ nodes at the current level
are added to set $\workingSet$. The other nodes at that level have a null duration
and can be scheduled afterwards with either a parent or child.
An additional check in l.~\ref{line:sched:discard} ensures that non-optimal variants
(whose longest branch exceeds a previously encountered minimum) are discarded without needlesly computing the schedule.
Nodes in $\workingSet{}$ are assigned an agent and time slot, prioritising those with
higher $\depth$ (\ie taller subtrees), as long as an agent is available.
Assigned nodes are removed from $\workingSet$, and any that remain (\eg when the bound was exceeded)
are carried over to the next level iteration.
At this point, it is possible for a parent and a child node to be in $\workingSet$ concurrently.
However, since higher $\depth$ takes precedence, they will never be scheduled in the wrong order,
and an extra check in the while loop avoids scheduling both nodes to be executed in parallel.

\begin{algorithm}[t]
	\caption{\textsc{Schedule}($\inputDAG,\varSlots,\varNumAgents$)}\label{algo:schedCandidate}

		$l \gets 0, \varSlot \gets \varSlots,\workingSet \gets \emptyset, \varNodesLeft \gets \inputDAG.n$	\label{line:sched:numslots}

		\While{$\varNodesLeft>0$ \textbf{and} $\varSlot > 0$\label{line:minsched:mainloop}}{
			$\varAgent \gets 1$



			$\workingSet \gets \workingSet \cup \{\N \ |\ \N.\type=\gateSEQ \land \N.\level=l\}$

			\If{$\exists_{\N \in \workingSet}$, \textbf{s.t.} $\N.\depth < \varSlots
			- \varSlot$\label{line:sched:discard}}{
                \Return $\emptyset, \varNodesLeft$
			}

			\While{$\varAgent \leq \varNumAgents$ \textbf{and} $\workingSet \neq \emptyset$ \textbf{and} \qquad\qquad\qquad\qquad
					(Pick $\N\in\workingSet$, \textbf{s.t.}
					$\forall_{\N' \in \workingSet} \N.\depth \geq \N'.\depth\, \land$ \qquad
					$\forall_{\N': \N'.\varSlot = \varSlot} \N' \notin ancestors(\N)) \neq \emptyset$\label{line:minsched:prioritise}}{

				$\N.\agent \gets \varAgent$

				$\N.\slot \gets \varSlot$

				$\varAgent \gets \varAgent+1,\varNodesLeft	\gets \varNodesLeft-1\label{line:minsched:parexec}$

				$\workingSet \gets \workingSet \setminus \{\N\}$
			}

			$\textsc{ReshuffleSlot}(\slot, \varAgent-1)$
			\label{line:minsched:shuffle}

			$l \gets l+1, \varSlot \gets \varSlot-1$
			}

	$output \gets \bigcup_{N\in\inputDAG} \{ (\N.\agent,\N.\slot) \}$

	\Return $output, \varNodesLeft$
\end{algorithm}


\Cref{algo:schedCandidate} calls function $\textsc{ReshuffleSlot}$ after the complete
assignment of a time slot at l.~\ref{line:minsched:shuffle}
to ensure consistent assignment of sub-actions of the same \ADT/ node.
Note that depending on $depth$, a sub-action may be moved to the next slot,
creating an interrupted schedule where an agent stops an action for one or more time units to handle another.
Alternatively, agents may collaborate, each handling a node's action for a part of its total duration.
Such assignments could be deemed unsuitable for specific scenarios where extra conditions need to be satisfied. 
In those cases,  manual reshuffling or adding extra agent(s)  is left to the user's discretion.

At this point, either the upper or the lower bound on the number of agents is adjusted,
depending on whether the resulting schedule is valid (that is, there are no nodes left to assign at the end).
Scheduling is then repeated for these updated values until the minimal number of agents is found (\ie the two bounds are equal).

After the complete computation for a given \DAG/, l.~\ref{line:minsched:zero}
calls function $\textsc{ZeroAssign}$ in order to obtain assignments
for all remaining nodes,
\ie those of zero duration. Functions $\textsc{ReshuffleSlot}$ and $\textsc{ZeroAssign}$
are detailed in \Cref{sec:algo:reshuffle,sec:algo:zero}, respectively.


Although this algorithm assumes the minimal time is of interest,
it can be easily modified to increase the number of time slots,
thus synthesising the minimal number of agents required for a successful attack of any given duration.

\subsection{Uniform assignment for SEQ nodes}
\label{sec:algo:reshuffle}

A separate subprocedure, given in \Cref{algo:reshuffle}, swaps assigned agents between nodes at the same level
so that the same agent handles all \gateSEQ nodes in sequences obtained during the time normalisation step
(\ie corresponding to a single node in the original \ADT/).

\begin{algorithm}
	\caption{\textsc{ReshuffleSlot}($slot, num\_agents$)}
	\label{algo:reshuffle}
	\For{$agent \in \{1..num\_agents\}$}{
		$\varCurrentNode \gets \N$, \textbf{s.t.} $\N.\varAgent
		= \varAgent\land \N.\slot=\slot$

		$\varParentAgent \gets \parent(\varCurrentNode).\varAgent$

		\If{$\varParentAgent \neq \varAgent\land\varParentAgent\neq 0$\label{line:reshuffle:seqonly}}{
			\If{$\exists \N' \neq \varCurrentNode$, \textbf{s.t.}
				$\quad\N'.\agent = \varParentAgent\land \N'.\slot=\slot$}{
				$\N'.\agent \gets \varAgent$ \Comment*{Swap with $\N'$ if it exists}

				$\N'.\slot \gets \slot$
			}
			$\varCurrentNode.\agent \gets \varParentAgent$

			$\varCurrentNode.\slot \gets \slot$
		}
	}
\end{algorithm}

\begin{proposition}
	\label{prop:algo:reshuffle}
	Reshuffling the assignment by swapping the agents assigned to a pair of nodes in the same slot
	does not affect the correctness of the scheduling.
\end{proposition}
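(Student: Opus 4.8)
The plan is first to make explicit which conditions constitute \emph{correctness} of a schedule produced by \Cref{algo:schedCandidate}, and then to verify that each of them is invariant under a single transposition of agents between two nodes sharing a slot. Concretely, the output of \textsc{Schedule} is a partial map assigning to every \gateSEQ node an agent and a time slot such that: (i) no two \gateSEQ nodes are assigned the same agent \emph{and} the same slot (one agent performs at most one action per time unit); (ii) the execution order is respected, i.e.\ whenever one node is an ancestor of another their slots differ in the required direction --- a property that \textsc{Schedule} maintains through the check at l.~\ref{line:sched:discard} and the ancestor test in l.~\ref{line:minsched:prioritise}, and which is a function of the \emph{slot} assignment and the level-by-level processing order alone; (iii) at most $\varNumAgents$ distinct agents occur; and (iv) every \gateSEQ node is assigned, so that the schedule fits within the $\varSlots$ available slots. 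These are precisely the properties on which the outer loop of \Cref{algo:minsched} relies when it tests $\varNodesLeft=0$, so it suffices to show they survive the reshuffle.

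Now fix two nodes $\N_1,\N_2$ with $\N_1.\slot=\N_2.\slot=s$ and (necessarily distinct, by (i)) agents $\N_1.\agent=a_1$, $\N_2.\agent=a_2$; the reshuffling step leaves every other node untouched and sets $\N_1.\agent:=a_2$, $\N_2.\agent:=a_1$. Since only the \emph{agent} fields of $\N_1$ and $\N_2$ change, the slot assignment of the whole \DAG/ is identical before and after; hence (ii) holds verbatim, and since each formerly assigned node is still assigned, (iv) is immediate, and in particular the number of slots used --- hence the attack time and the optimality claims attached to it --- is unaffected. For (iii): the set of agents occurring in the schedule is unchanged, because $a_1$ and $a_2$ were already present (realised by $\N_1$ and $\N_2$) and after the swap are still present (realised by $\N_2$ and $\N_1$), while no agent outside $\{a_1,a_2\}$ is touched; so the agent count, and the bound $\le\varNumAgents$, is preserved. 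For (i): the only agent--slot pairs whose occupant changes are $(a_1,s)$ and $(a_2,s)$, and after the swap each is again occupied by exactly one node ($\N_2$ and $\N_1$, respectively), while every other such pair keeps its previous occupant; so ``at most one node per agent per slot'' still holds. All four conditions survive, which is the proposition.

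Although the statement concerns a single swap, the same argument gives closure under composition: \textsc{ReshuffleSlot} performs a sequence of such transpositions within one slot, each preserving correctness, hence so does the whole procedure, and a fortiori \textsc{MinSchedule}'s repeated invocations of it. I do not anticipate a real obstacle; the only point deserving care is the observation, used for (ii), that the precedence/ordering guarantee of \textsc{Schedule} depends on the slot numbers (and the order in which levels are processed) but not on agent identities, so relabelling agents cannot introduce an out-of-order execution --- the remainder is bookkeeping on finite sets.
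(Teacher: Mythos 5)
Your proof is correct and follows the same route as the paper's (the paper defers to \cite[Proposition~4]{ICECCS2022}, whose argument is precisely this invariance reasoning): correctness of a schedule depends only on the slot assignment (precedence and slot count) and on the per-slot agent occupancy, and a transposition of agents within one slot changes neither, so validity, agent count, and attack time are all preserved. Your additional remarks on closure under composition and on the case analysis inside \textsc{ReshuffleSlot} are consistent with how the paper uses the proposition, so nothing is missing.
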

\begin{proof}
See~\cite[Proposition~4]{ICECCS2022}.
\end{proof}

\subsection{Assigning nodes without duration}
\label{sec:algo:zero}


After all non-zero duration nodes have been assigned and possibly reshuffled at each
level,
Alg.~\ref{algo:zero} handles the remaining nodes.
\begin{algorithm}
	\caption{\textsc{ZeroAssign}($\inputDAG$)}
	\label{algo:zero}
		$\workingSet{}\gets\{\N\ |\ \N.\agent=0\}$ \Comment*
		{Nodes not assigned yet}
		\For{$\node{}\in\workingSet{}$\label{line:zero:seq:start}}{
			\If{$\N\in\parent(\node{})\land \N.\type=\gateSEQ{}$}{
				$\node.\agent\gets \N.\agent$

				$\node.\slot\gets \N.\slot$

				$\workingSet{}\gets\workingSet{}\setminus\{\node\}$
			}
		\label{line:zero:seq:end}}
		\While{$\workingSet{}\neq\emptyset$}{
			\For{$\node{}\in\workingSet{}$ \textbf{s.t.} $node.type \in \{\gateNULL,
			\gateOR, \gateLEAF\}$\label{line:zero:nullorleaf:start}}{
				\If{$\N.\agent\neq 0$ \textbf{s.t.} $\N\in\children(\node{})$}{
					$\node.\agent\gets \N.\agent$

					$\node.\slot\gets \N.\slot$

					$\workingSet{}\gets\workingSet{}\setminus\{\node\}$
				}
				\If{$(\children(\node)=\emptyset$\\
						$\lor (\N.\depth=0$ \textbf{s.t.}
						$\N\in\children(\node)))$}{
					$\varParentNode\gets\N\in\parent(\node)$ \textbf{s.t.} \mbox{\quad}$\forall_
					{\N'\in \parent(\node{})} \N.\slot\leq\N'.\slot$

					\If{$\varParentNode.agent\neq 0$}{
						$\node.\agent\gets \varParentNode.\agent$

						$\node.\slot\gets \varParentNode.\slot$

						$\workingSet{}\gets\workingSet{}\setminus\{\node\}$
					}
				}
			\label{line:zero:nullorleaf:end}}
			\For{$\node{}\in\workingSet{}$ \textbf{s.t.} $\node{}.\type=\gateAND$\label{line:zero:and:start}}{
				\If{$\node.\depth=0\land\parent(\node).agent\neq
						0$}{
					$\node.\agent\gets \parent(\node).\agent$

					$\node.\slot\gets \parent(\node).\slot$

					$\workingSet{}\gets\workingSet{}\setminus\{\node\}$
				}
				\If{$\node.\depth\neq 0$\\
					$\land\forall_{\N\in\children(\node)}
				(\N.\agent\neq 0\lor\N.\depth=0)$}{
					$\varChildNode{}\gets\N\in\children(\node)$ \textbf{s.t.}
					\mbox{\quad}$\forall_{\N'\in \children
						(\node{})} \N.\slot\geq\N'.\slot$

					$\node.\agent\gets \varChildNode.\agent$

					$\node.\slot\gets \varChildNode.\slot$

					$\workingSet{}\gets\workingSet{}\setminus\{\node\}$
				}
					\label{line:zero:and:end}}
		}
\end{algorithm}
Our choice here stems from the \ADT/ gate the node originates from.
We first assign zero-duration nodes to the same agent and time
slot as their parent if the parent is a \gateSEQ node (l.~\ref{line:zero:seq:start}--\ref{line:zero:seq:end}).
\gateNULL, \gateOR and \gateLEAF nodes get the same
assignment as their only child if any, or as their parent if they have no child
(l.~\ref{line:zero:nullorleaf:start}--\ref{line:zero:nullorleaf:end}). The latter case
may happen for \gateNULL{} when handling defences as in \eg{}
Fig.~\ref{fig:pre:nodef:dnok}, and for \gateLEAF{} nodes originally of null duration.
\gateAND nodes are assigned the same agent and time slot as the child that occurs
last (l.~\ref{line:zero:and:start}--\ref{line:zero:and:end}).

Note that in all cases the agents (and time slots) assigned to zero duration nodes are the same as those
of their immediate parents or children. Hence, no further reshuffling is necessary.

%

\begin{proposition}
	\label{prop:algo:zero}
	Adding nodes of zero duration to the assignment in Alg.~\ref{algo:zero} does not affect
	the correctness of the scheduling.
\end{proposition}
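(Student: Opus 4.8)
The plan is to show that \textsc{ZeroAssign} terminates having assigned \emph{every} zero-duration node, and that each assignment it makes respects the scheduling constraints, namely (i) an agent handles at most one node per time slot, and (ii) every node is scheduled no earlier than all of its children and no later than its parent. Since \Cref{prop:pre:zero} already guarantees that a zero-duration node can legitimately share the slot and agent of its parent or of its last-occurring child, the bulk of the argument is to verify that the three cases of the algorithm (\gateSEQ-parent; \gateNULL/\gateOR/\gateLEAF; \gateAND) between them exhaust all zero-duration nodes and only ever copy an assignment from an \emph{adjacent} (parent or child) node that has already been assigned.

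First I would argue \textbf{well-definedness of each copy step}. In l.~\ref{line:zero:seq:start}--\ref{line:zero:seq:end}, a node whose parent is a \gateSEQ node inherits that parent's slot; since the parent has non-zero duration it was assigned by \textsc{Schedule}, so the value exists, and placing the child in the parent's slot with the parent's agent is exactly the situation sanctioned by \Cref{prop:pre:zero} (``immediately before the parent, same agent, same slot''). For the second group, a \gateNULL/\gateOR/\gateLEAF node has at most one child after preprocessing; if that child is already assigned we copy from it (scheduling the node in the child's slot, which is consistent because the node has zero duration and sits immediately above the child), and otherwise — the child is absent or has $\depth=0$, meaning nothing below forces a later slot — we copy from the earliest-slot parent. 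For \gateAND nodes with $\depth\neq 0$ we copy from the last-occurring (maximal-slot) child, again the case blessed by \Cref{prop:pre:zero}; with $\depth=0$ we copy from the parent. In every branch the donor is a neighbour in the DAG, so constraint (ii) is preserved, and since the donor and recipient occupy the \emph{same} slot with the \emph{same} agent, no new parallel conflict for constraint (i) is created beyond what the donor already satisfied — and because zero-duration nodes can freely coexist with their neighbour in a slot, there is no genuine parallelism to violate.

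Next I would argue \textbf{termination and completeness} via a topological/height induction. Order the zero-duration nodes by the structure of the DAG. A zero-duration node either has a \gateSEQ parent (handled immediately), or all its relevant neighbours are themselves either non-zero nodes (already assigned) or zero-duration nodes lower/higher in the DAG; each pass of the \textbf{while} loop assigns at least every zero-duration node all of whose ``blocking'' neighbours are already assigned, so by induction on the longest chain of consecutive zero-duration nodes the working set strictly shrinks and empties. The one point needing care is that for \gateAND nodes with $\depth\neq0$ the guard requires \emph{all} children to be assigned-or-$\depth$-$0$; I must check this is eventually met, which follows because the children are processed earlier in the same induction (they are strictly lower), and children with $\depth=0$ are explicitly exempted. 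Finally I would close by noting the remark already made in the text — all assigned agents/slots for zero-duration nodes coincide with those of an immediate neighbour — so no further call to \textsc{ReshuffleSlot} is needed and the uniform-assignment property from \Cref{prop:algo:reshuffle} is untouched.

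The \textbf{main obstacle} I anticipate is the completeness half: making the induction watertight when long chains of zero-duration nodes interleave \gateAND, \gateOR, \gateNULL and \gateLEAF types, since a \gateAND waits on \emph{all} its children while the others wait on \emph{one}, and one must rule out a cyclic ``waiting'' among zero-duration nodes — this is where acyclicity of the DAG together with the $\depth=0$ escape clauses does the real work, and I would spell that dependency out carefully. The correctness half is comparatively routine once \Cref{prop:pre:zero} is invoked, since every step is a copy from a parent or a last child.
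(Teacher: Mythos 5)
Your proposal is correct and follows essentially the same route as the paper's proof (which is given only by reference to \cite[Proposition~5]{ICECCS2022}): a case analysis over the node types handled in Alg.~\ref{algo:zero}, showing that every zero-duration node inherits the agent and time slot of an adjacent parent or last-occurring (respectively, only) child, which is exactly the situation licensed by \Cref{prop:pre:zero}, combined with an argument that all such nodes are eventually assigned. Your explicit DAG-induction for completeness, and the care taken with the $\depth=0$ escape clauses for \gateAND{} nodes, match the intent of the algorithm and do not diverge from the paper's approach.
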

\begin{proof}
See~\cite[Proposition~5]{ICECCS2022}.
\end{proof}

\subsection{Complexity and correctness}
\label{sec:algo:proof}


We now consider the algorithm's complexity and prove that it achieves its intended goal.

\begin{proposition}
	\Cref{algo:minsched} is in $\mathcal{O}(kn^2\log n)$, where $k$ is
	the number of input \DAG/s, and $n$ their average number of nodes.
%
		%
	\begin{proof}
	See~\cite[Proposition~6]{ICECCS2022}.
	\end{proof}
\end{proposition}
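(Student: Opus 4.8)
The plan is to split \Cref{algo:minsched} into the work performed once per input \DAG/ and the work performed inside the bisection loop, bound each in terms of the number of nodes $n$ of a \DAG/, and then sum over the $k$ \DAG/s — the outer \textbf{while} loop removes one \DAG/ per iteration, so it contributes only this linear factor. For the once-per-\DAG/ part, $\textsc{DepthNode}$ and $\textsc{LevelNode}$ are depth-first traversals that, with the standard ``already visited'' marking for shared sub-\DAG/s, touch each node and each edge a bounded number of times, hence cost $\mathcal{O}(n+E)=\mathcal{O}(n^2)$ in the worst case; the node deletion at l.~\ref{line:minsched:delete}, the computation of $\varSlots$ and $\varBound$, and the single pass computing $\varMaxAgents$ are all $\mathcal{O}(n)$; and $\textsc{ZeroAssign}$ (l.~\ref{line:minsched:zero}) is $\mathcal{O}(n^2)$ (shown below). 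So everything outside the bisection loop is $\mathcal{O}(n^2)$ per \DAG/.

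For the bisection loop, the interval $[\varBound,\varUpperBound]$ has width at most $\varMaxAgents\le n$ — at most $n$ nodes are of type \gateSEQ, so at most $n$ of them share a level — and $\varBound\ge 0$, so the loop runs $\mathcal{O}(\log n)$ times, each iteration making one call to $\textsc{Schedule}$ plus $\mathcal{O}(1)$ bookkeeping, with at most one extra $\textsc{Schedule}$ call afterwards (l.~\ref{line:minsched:minagents}). Thus it suffices to show $\textsc{Schedule}\in\mathcal{O}(n^2)$: the body of the outer \textbf{while} of \Cref{algo:minsched} then costs $\mathcal{O}(n^2)+\mathcal{O}(\log n)\cdot\mathcal{O}(n^2)=\mathcal{O}(n^2\log n)$, and summing over the $k$ \DAG/s gives $\mathcal{O}(kn^2\log n)$.

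It remains to bound $\textsc{Schedule}$ (\Cref{algo:schedCandidate}) and $\textsc{ZeroAssign}$ (\Cref{algo:zero}). The outer loop of $\textsc{Schedule}$ (l.~\ref{line:minsched:mainloop}) runs at most $\varSlots\le n$ times, one per level; over the whole run each \gateSEQ node enters $\workingSet$ once and is assigned once, so the number of successful picks equals the number of \gateSEQ nodes, at most $n$ in total, plus at most one failing guard evaluation per level. A single pick scans $\workingSet$ for a node of maximal $\depth$ and tests the ancestor-exclusion condition of l.~\ref{line:minsched:prioritise}; with light bookkeeping (flagging the descendants of a carried-over node when it is placed in the current slot) this is $\mathcal{O}(n)$, so all picks together cost $\mathcal{O}(n^2)$, as do the $\mathcal{O}(n)$ guard evaluations and the $\mathcal{O}(n)$-per-level feasibility test of l.~\ref{line:sched:discard}. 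Each call to $\textsc{ReshuffleSlot}$ at l.~\ref{line:minsched:shuffle} works within a single slot, so writing $m_l$ for the number of agents used in that slot the call costs $\mathcal{O}(m_l^2)$; since $\sum_l m_l\le n$ we have $\sum_l m_l^2\le(\sum_l m_l)^2\le n^2$, so all reshuffles together are $\mathcal{O}(n^2)$. This gives $\textsc{Schedule}\in\mathcal{O}(n^2)$. For $\textsc{ZeroAssign}$, the outer \textbf{while} makes progress every round (at least one remaining zero-duration node receives its parent's or a child's assignment), so there are at most $n$ rounds, each scanning $\workingSet$ and, per node, inspecting its parents/children a bounded number of times: $\mathcal{O}(n^2)$ overall.

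The step I expect to be the main obstacle is this fine-grained accounting inside $\textsc{Schedule}$: one must argue that the maximal-$\depth$ selection together with the ancestor-exclusion test can be realised in $\mathcal{O}(n)$ per pick — a naive re-scan of $\workingSet$ for every rejected candidate would blow a single pick up to $\mathcal{O}(n^2)$ and the overall bound to $\mathcal{O}(kn^3\log n)$ — and that the per-slot cost of $\textsc{ReshuffleSlot}$ really telescopes via $\sum_l m_l^2\le n^2$ rather than being charged $\mathcal{O}(n)$ at every level. The remaining ingredients (the $\mathcal{O}(\log n)$ bisection count, and the $\mathcal{O}(n)$ or $\mathcal{O}(n^2)$ traversals) are routine.
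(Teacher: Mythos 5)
Your proposal is correct and follows essentially the same route as the paper's (cited) proof: bound the once-per-\DAG/ passes ($\textsc{DepthNode}$, $\textsc{LevelNode}$, $\textsc{ZeroAssign}$) by $\mathcal{O}(n^2)$, show a single call to $\textsc{Schedule}$ is $\mathcal{O}(n^2)$, observe that the bisection on the number of agents runs $\mathcal{O}(\log n)$ times since the interval width is at most $\varMaxAgents\le n$, and multiply by the $k$ \DAG/ variants to get $\mathcal{O}(kn^2\log n)$. The extra implementation-level accounting you add (the $\mathcal{O}(n)$-per-pick bookkeeping for the ancestor test and the $\sum_l m_l^2\le n^2$ telescoping for $\textsc{ReshuffleSlot}$) goes beyond the granularity of the reference proof but only strengthens the same argument.
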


Thus, while the scheduling algorithm itself is quadratic,
it is executed for $k$ \DAG/ variants, where $k$ is exponential
in the number of $\gateOR$ and defence nodes in the \ADT/.

\begin{proposition}
	The assignments returned by \Cref{algo:minsched} are correct and use the minimal number of agents
	for each variant $\inputDAG \in \DAGset$ to achieve the attack in	minimal time.
	\begin{proof}
	See~\cite[Proposition~7]{ICECCS2022}.
	\end{proof}
\end{proposition}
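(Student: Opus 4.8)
Fix a variant $\inputDAG\in\DAGset$; since the outer loop of \Cref{algo:minsched} treats variants independently, it suffices to argue for this one. Let $n$ be its number of \gateSEQ{} nodes and $l=root(\inputDAG).\depth$ its number of levels. By \Cref{prop:pre:zero,prop:pre:time,prop:pre:sand}, every legal schedule of $\inputDAG$ is a legal schedule of the attack encoded by the corresponding \ADT/ fragment, with the same makespan, so we may reason about $\inputDAG$ alone. I would prove three claims: (a) any assignment \Cref{algo:minsched} outputs for $\inputDAG$ is \emph{legal} (it respects every precedence edge of $\inputDAG$ and never maps two \gateSEQ{} nodes to the same $(\agent,\slot)$ pair) and has makespan exactly $l\cdot\tunit$; (b) $l\cdot\tunit$ is the least achievable makespan; (c) the assignment uses the fewest agents over all legal schedules of makespan $l\cdot\tunit$.

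For (a) I would establish a loop invariant for \textsc{Schedule} (\Cref{algo:schedCandidate}). Writing $s$ for the slot filled at a given iteration, the invariant says that afterwards: (i) the \gateSEQ{} nodes placed in $s$ carry pairwise-distinct agents in $\{1,\dots,\varNumAgents\}$, since $\varAgent$ is incremented exactly once per placement and never exceeds $\varNumAgents$; (ii) no \gateSEQ{} node is placed in the same slot as, or an earlier slot than, one of its \gateSEQ{} descendants — because $\inputDAG$ is swept level by level from the root, l.~\ref{line:minsched:prioritise} always picks a node of maximal $\depth$ among those still in $\workingSet$ (so an ancestor, having a taller subtree, is placed before its descendants), and the side condition on that line forbids putting a node in a slot already holding one of its ancestors; (iii) a node carried over to the next iteration moves to a strictly smaller slot, and the guard at l.~\ref{line:sched:discard} aborts as soon as such a carry-over — or an over-long \gateOR{} branch, cf.\ \Cref{sec:pre:or} — makes the partial assignment non-extendable to a length-$l$ schedule. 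Hence \textsc{Schedule} returns $\varNodesLeft=0$ only when the assignment is legal and occupies exactly the slots $1,\dots,l$. Since \textsc{ReshuffleSlot} only permutes agents within one slot, \Cref{prop:algo:reshuffle} keeps it legal; \textsc{ZeroAssign} then places the $0$-duration nodes into already-used slots, and \Cref{prop:algo:zero} keeps it legal without lengthening the schedule.

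For (b), by \Cref{sec:algo:bounds} the longest root-to-leaf path of $\inputDAG$ has exactly $l$ nodes of duration $\tunit$, totally ordered by precedence, so no legal schedule has makespan below $l\cdot\tunit$; and by (a), \Cref{algo:minsched} — which sets $\varSlots=l$ — only commits an assignment for which \textsc{Schedule} reports $\varNodesLeft=0$, so the output has makespan exactly $l\cdot\tunit$. For (c) I would first show that \textsc{Schedule}$(\inputDAG,l,g)$ is a \emph{sound and complete} test of ``$\inputDAG$ admits a legal $g$-agent, $l$-slot schedule'': soundness is (a); completeness I would obtain by a greedy-exchange argument — starting from any legal $g$-agent, $l$-slot schedule and moving from slot $l$ down to slot $1$, repeatedly swap a low-$\depth$ node the schedule puts in the current slot for the high-$\depth$ node the algorithm prefers there, checking that each swap keeps the schedule legal and within $g$ agents. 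As feasibility is monotone in $g$ (a surplus agent may simply idle), the feasible agent counts form an up-set $[g_{\min},\infty)$, and $g_{\min}$ is the target. It remains to check that the bisection loop of \Cref{algo:minsched} preserves the invariant $\varBound<g_{\min}\le\varUpperBound$: initially $\varBound=\lceil n/l\rceil-1<g_{\min}$ because packing $n$ nodes into $l$ slots needs at least $\lceil n/l\rceil$ agents, while $\varUpperBound=\varMaxAgents\ge g_{\min}$ because with $\varMaxAgents$ agents every level fits in its slot, so no carry-over occurs and \textsc{Schedule} succeeds; each iteration halves $\varUpperBound-\varBound$ and, by soundness and completeness of \textsc{Schedule}, updates the correct endpoint; on exit $\varUpperBound-\varBound=1$, so $\varUpperBound=g_{\min}$, and the guarded re-run at l.~\ref{line:minsched:minagents} recovers a witnessing schedule in the corner case where no bisection step stored one.

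The crux is the completeness half of (c): one must show the purely local rule ``tallest subtree first, placed as late as precedence and the slot budget allow'' never strands nodes whenever a $g$-agent, $l$-slot schedule exists. Getting the exchange argument through — choosing which node to swap, and verifying simultaneously that precedence and the $l$-slot budget survive every swap — is the only place a genuine optimality argument is required; the remainder is induction over the invariants above.
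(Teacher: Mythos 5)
There is a genuine gap, and you have in fact pointed at it yourself: the completeness half of your claim (c) --- that \textsc{Schedule}$(\inputDAG,l,g)$ succeeds whenever \emph{some} legal $g$-agent, $l$-slot schedule exists --- is only announced (``I would obtain by a greedy-exchange argument\dots''), never carried out. This is not a routine verification that can be waved through: for general precedence DAGs with unit-time tasks and a bounded number of processors, deciding whether a deadline can be met is NP-hard, and simple list-scheduling rules such as ``highest level / tallest subtree first'' are known to be suboptimal on general DAGs (they are provably optimal only for special classes, e.g.\ trees in Hu's algorithm, as the paper's related-work section itself notes). The preprocessed \DAG/s here are \emph{not} trees: the \gateNULL{} nodes introduced for \gateSAND{} create cross edges between subtrees. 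So any correct proof must exploit the specific structure of these \DAG/s (chains of \gateSEQ{} nodes produced by time normalisation, the way \gateNULL{} nodes separate sequential blocks, the prior pruning of dominated \gateOR{} branches), and your slot-by-slot swap sketch never identifies which structural property makes the exchange go through --- in particular, after swapping a low-$\depth$ node out of the current slot, its own descendants and the descendants of the swapped-in high-$\depth$ node impose different constraints on the remaining slots, and you give no argument that the $l$-slot budget survives. Without that, claims (b) and the bisection invariant are fine but the optimality conclusion (minimal number of agents) is unsupported. Note also that the paper itself does not prove the proposition here; it defers to Proposition~7 of the conference version \cite{ICECCS2022}, which is where the detailed argument you are gesturing at actually lives, so your plan cannot substitute for it until the exchange step is made precise for this class of \DAG/s.

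A secondary, smaller point: your invariant (iii) for \Cref{algo:schedCandidate} slightly misstates what the guard at l.~\ref{line:sched:discard} does --- it rejects when a node in $\workingSet$ has $\depth < \varSlots-\varSlot$, i.e.\ when a carried-over node can no longer head a subtree fitting in the remaining slots; this is a soundness device (abort early), not by itself evidence that the greedy never creates such a situation unnecessarily, which is again the missing completeness argument.
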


\subsection{Scheduling for the treasure hunters \ADT/}
\label{sec:algo:ex}

We now apply these algorithms to the treasure hunters example.
\Cref{fig:sched:pruneNlabel} shows the output of the three initial subprocedures.
The depth of nodes assigned by $\textsc{DepthNode}$ is displayed in \textcolor{green!60!black}
{green}.
The branch corresponding to attack \leaf{e} has been pruned 
as per \Cref{sec:pre:or}.
Levels assigned by $\textsc{LevelNode}$ are displayed in \textcolor{blue}{blue}.
Finally, the agents assignment computed by \Cref{algo:minsched} is shown in \Cref{fig:sched:assign}.

\begin{figure}[!!htb]
	\centering
	\scalebox{0.67}{
		\rowcolors{2}{lightgray!30}{white}

		\begin{tabular}{c|l|l}
			\diagbox[]{slot}{agent} & 1                                                & 2                       \\
			\hline
			125                     & \leaf{h_3}, \leaf{GA'}, \leaf{TF'_2}, \leaf{TS'} &                         \\
			124                     & \leaf{h_2}                                       &                         \\
			123                     & \leaf{h_1}, \leaf{h'}                            &                         \\
			122                     & \leaf{ST_2}, \leaf{TF'_1}                        &                         \\
			121                     & \leaf{ST_1}, \leaf{ST'}                          &                         \\
			120                     & \leaf{f_{120}}                                   & \leaf{b_{60}}           \\
			$\cdots$                & $\cdots$                                         & $\cdots$                \\
			61                      & \leaf{f_{61}}                                    & \leaf{b_{1}}, \leaf{b'} \\
			60                      & \leaf{f_{60}}                                    &                         \\
			$\cdots$                & $\cdots$                                         &                         \\
			1                       & \leaf{f_{1}}, \leaf{f'}                          &                         \\
			\hline
		\end{tabular}
	}
		\caption{Treasure hunters: Assignment of \Cref{algo:minsched}}
		\label{fig:sched:assign}
\end{figure}

\section{Experiments}
\label{sec:expe}


The algorithms presented here are implemented in our open source tool \tool~\cite{adt2amas}, written in \texttt{C++17}.
It allows for specifying input \ADT/s either via simple-syntax text files or using
an intuitive GUI,
and handles both their translation to extended AMAS and computation of an optimal schedule with minimal number of agents.
Intermediary steps of the algorithm can be exported as Tikz figures, allowing to easily visualise and understand them.
For more details on the architecture of \tool, we refer the reader to~\cite{ADT2AMASDemoPaper}.
Here, we present its application to the use cases from \cite{ICFEM2020}, plus
examples that feature some specific behaviour. 
All the figures and tables of the examples can be found in the supplementary material of
this paper \url{https://bit.ly/3ONeSzq} and in the extended version of \cite{ICECCS2022}
available at \url{https://arxiv.org/abs/2101.06838}. 

\paragraph*{forestall} This case study models forestalling a software instance. Depending
on the active defences, 4 cases are possible. However, the \DAG/ for no active
defence and the one where the  only active defence is  \leaf{id} (intrusion detection \cite{ICFEM2020}),  are the same. All three remaining
\DAG/s  have an optimal schedule with only 1 agent, in 43 days for the no defence
(or \leaf{id} only) case, 54 if only \leaf{scr} (secure coding rooms) is active, and 55 if both defences
occur. Although only a single agent is needed to achieve the attack in minimal time,
the schedule exhibits which specific attacks must be performed to do so.

\paragraph*{iot-dev} This example models an attack on an IoT device via a network.
There are 4 cases, according to the active
defences, but only the one with no
defence leads to a \DAG/. Indeed, \leaf{tla} (two-level authentication) causes the failure of \gate{GVC}(get valid credentials) which
in turn makes \gate{APN}(access private net) and then \gate{APNS} fail, independent of the defence 
\leaf{inc} (inform of new connections).
Thus the attack necessarily fails. This is also the case if defence \leaf{inc} is
active. The only way for an attack to succeed is that all defences fail, leading to
an optimal schedule in 694 minutes with 2 agents. Hence an attacker will use 2 agents
to perform the fastest attack. On the other hand, the defender knows that a single
one of the two defences is sufficient to block any attack.

\paragraph*{gain-admin} This third case is about an attacker trying to gain
administration privileges on a computer system. There are 16 possible defences
combinations,
which are covered
by only 3 cases: \leaf{scr} (secure coding rooms) is not active; \leaf{scr} is active but not \gate{DTH} (defence against trojans);
both of them are active. In all three cases, the shortest attack requires only a single
agent, and can be scheduled
in 2942, 4320 and 5762 minutes, respectively.


\paragraph*{Exhibiting particular scheduling features} Experiments were conducted
on the example used in~\cite{ICFEM2020} to evaluate the impact of the number of agents on the
attack time, and two small examples designed
to exhibit particular characteristics of the schedule.
Our algorithm confirms an optimal schedule in 5 minutes with 6
agents for the example of~\cite{ICFEM2020}.
Then, \emph{interrupted} (see \Cref{fig:interrupted}) 
shows that the scheduling algorithm can produce an interleaved execution of two
attacks (\leaf{b} and \leaf{e}), assigned to the same agent.
Finally, the \emph{last} example provides a succession of nodes with 0 duration (
\leaf{a'},
\leaf{e'}, \leaf{f'}, \leaf{h'}
and \leaf{i'}), and shows they are handled as expected.

\paragraph*{Scaling example}
In the \emph{scaling} example, the first agent processes the longest path while the
second agent handles all other actions.
It is extended to analyse the scaling capabilities of the scheduling
algorithm. For this purpose, we wrote an automatic generator of \ADT/s.
The parameters of the generated \ADT/s are the \emph{depth},
the \emph{width} corresponding to the number of deepmost leaves,
the number of \emph{children} for each \gate{AND},
and the total number of \emph{nodes}.
All nodes have time 1 except the first leaf that has time $\mathit{width}-1$.
The results show that the number of agents is not proportional to the width of the tree,
and the optimal scheduling varies according to the time of nodes.
{We refer the reader to~\cite{ICECCS2022} for a detailed comparison.}


\section{A general approach with Rewriting Logic}
\label{sec:rewriting}

This section presents an alternative approach for solving the optimal
scheduling problem in ADTrees, which is more general in the sense that it does not build
upon a dedicated algorithm. We start with an appropriate representation for
the ADTree structure (\S \ref{subsec:eq}) and present a rewrite theory giving
meaning to the gates of the tree (\S \ref{subsec:semantics}).
Since the resulting theory is
executable,  we can use the system Maude \cite{DBLP:conf/maude/2007} as a
decision procedure to enumerate all the possible configurations leading to an
attack and find the optimal one (\S \ref{sec:num-agents}). However, without a
suitable strategy, it is not efficient enough for more
complex scenarios. Hence, we refine (\S \ref{subsec:heuristics}) the theory
by adapting some of the ideas and heuristics implemented in the specialised
algorithm proposed in \Cref{sec:algo}. The resulting procedure is easy to prove correct,
and exhibits good performance for all the case studies considered in \Cref{sec:expe}.

In what follows, we explain the main concepts behind Rewriting Logic (RL)
\cite{meseguer-rltcs-1992,meseguer-twenty-2012}, 
while gradually introducing the proposed rewrite theory for ADTrees. We adopt, in
most cases, the notation of Maude \cite{DBLP:conf/maude/2007}, a
high-level language supporting rewriting
logic theories. This allows for producing an executable
specification. For the sake of readability, we omit some details and the complete specification
can be found at the website of our tool \toolM \cite{adt2maude}.

A \emph{rewrite theory} is a tuple $\cR = (\Sigma, E \uplus B, R)$. The static
behaviour (\S \ref{subsec:eq}) of the system  is modelled by the order-sorted
equational theory $(\Sigma, E \uplus B)$ and the dynamic behaviour (\S
\ref{subsec:semantics})  by the set of rewrite rules $R$.

\subsection{Equational theory}\label{subsec:eq} The signature $\Sigma$ defines
a set of typed operators  used to build the terms of the language (\ie{} the
syntax of the modelled system).   $E$ is a set of (conditional) equations over
$T_\Sigma$ (the set of terms built from $\Sigma$) of the form $t = t' ~
  \mathbf{if} \phi$. The equations specify the algebraic identities that terms of
the language must satisfy. For instance, if the operator $|\cdot|$ denotes
the length of a sequence of symbols, then the following equations must
hold:  $|\epsilon| = 0$ and $|ax|  = 1+ |x|$ (where $\epsilon$ is the empty
sequence).

In $(\Sigma, E \uplus B)$, $B$ is a set of structural axioms over $T_\Sigma$ for
which there is a finitary matching algorithm. Such axioms include
associativity, commutativity, and identity, or combinations of them. For
instance,  $\epsilon$ is the identity for concatenation and then, modulo this
axiom, the terms $x\epsilon$ and $x$ are equivalent. The equational theory
associated with $\cR$ thus defines algebraic data types and deterministic and
finite computations as in a functional programming language.

RL allows for defining any syntax for the operators in $\Sigma$,
using \emph{sorts} along with constructors and operators for them.
Here is a simple example defining Peano's natural numbers:

\begin{maude}

fmod NAT is                 --- equational theory
 sort Nat.                  --- sort definition
 op 0 : -> Nat [ctor] .     --- zero
 op s : Nat -> Nat [ctor] . --- successor
 op _+_ : Nat Nat -> Nat .  --- addition
 vars x y : Nat .           --- logical variables
 eq 0 + x = x .             --- equations defining +
 eq s(y) + x = s(y + x) .
endfm
\end{maude}

The attribute \code{[ctor]} in the definition of zero and successor is
optional. It is used to document that these operators are constructors for
terms of sort \code{Nat}. The positions of the
arguments in the (mixfix) operator \code{+} are indicated with underscores
and the equations give meaning to it:  $\forall x:Nat, 0 + x = x$ and $\forall x
  y:Nat, s(y) + x = s(x + y)$. Hence, the term $s(0) + s(s(0))$ reduces to the
normal form $s(s(s(0)))$.

The starting point for our specification is to define an equational theory for
building terms representing ADTrees. In Maude, systems are specified
using a syntax resembling that of object oriented languages.
The needed sorts and operators are defined in the module
\code{CONFIGURATION}, available in  Maude's prelude. The idea is to represent
entities as record-like structures (sort \code{Object}) of the form $\langle O
  : C ~|~ a_1 : v_1,\cdots a_n : v_n \rangle$ where $O$ is an object identifier
(sort \code{Oid}), $C$ is a class identifier (sort \code{Cid}), $a_i$ is an
attribute (sort \code{Attribute}) and  $v_i$ is a term  that represents the
current value of $a_i$. 
We start by defining the class identifiers for each kind of gate:

\begin{maude}

  mod ADTree is          --- Rewrite theory ADTree
  --- Class IDs for Nodes
  op NOT  :    -> Cid .     op AND :      -> Cid .
  op SAND :    -> Cid .     op OR  :      -> Cid .
  op ATK  :    -> Cid .     op DEF :      -> Cid .
\end{maude}

The class \code{NOT} is used to define subtrees that are defences (as in NAND gates);
\code{SAND} stands for sequential \code{AND}; and the last two classes
represent attacks and defences.

The attributes for the gates include the (accumulated) time, cost, and
the number of agents needed to perform the attack:

\begin{maude}

  --- attributes for gates
  op time:_     : Nat      -> Attribute .
  op cost:_     : Nat      -> Attribute .
  op agents:_   : Nat      -> Attribute .
  op acctime:_  : Nat      -> Attribute .
  op acccost:_  : Nat      -> Attribute .
\end{maude}

The equational theory is ordered-sorted, \ie
there is a partial order on sorts defining a
sub-typing relation: \code{subsort Qid < Oid .}
The sort \code{Qid} is part of Maude's standard library and represents
quoted identifiers, \eg{} \code{'TS} (a sequence of characters preceded by an
apostrophe).  Hence, \code{'TS} is both a quoted identifier and an object identifier.

An interesting RL feature is the definition of
axioms for the operators ($B$ above), \eg it is straightforward to define
a list as a non-commutative monoid and a set as an abelian
monoid:
\begin{maude}

subsort Oid < List .        --- singleton list
subsort Oid < Set .         --- singleton set
op nil : -> List [ctor] .   --- empty list
op empty : -> Set [ctor].   --- empty set
--- building lists and sets
op __ : List List -> List [ctor assoc id: nil] .
op _,_ : Set Set -> Set [ctor assoc comm id: empty] .
\end{maude}

In this specification, the term  ``\code{'A 'B 'C}"  (resp. ``\code{'A, 'B,
  'C}") represents a list (resp. a set) of three object identifiers. The
concatenation operator \code{__} is called \emph{empty syntax}, since a white
space is used to concatenate elements. Note that being associative, the lists
``\code{'A ('B 'C)}" and ``\code{('A 'B) 'C}" are equivalent (modulo
\code{assoc}), as are the terms ``\code{'A, 'B, 'C}"
and ``\code{'C, 'B, 'A}" due to commutativity.

The sorts and operators needed to specify lists and sets are already available
in Maude. The sorts for these data structures are renamed here, respectively,  as
\code{NodeList} and \code{NodeSet} and used below to define two new attributes
for gates:

\begin{maude}

  --- ordered and unordered children
  op lchd:_ : NodeList -> Attribute .
  op schd:_ : NodeSet  -> Attribute .

\end{maude}

The first one is used for sequential gates \code{SAND},
and the second one for all others.
Each node is associated with a state:

\begin{maude}

  --- states for nodes in the tree
  sort Status .
  ops Fail Succeed Unknown  :        ->    Status .
  op stat:_                 : Status -> Attribute .
\end{maude}

Initially, all the nodes are in state \code{Unknown}, which may change to
\code{Succeed} or \code{Fail},
according to the rules described in the next section.

Suitable operators for building the different gates in an ADTree are
introduced. For instance:



\begin{maude}

--- building an attack: ID, time and cost
op makeAtk : Qid Nat Nat -> Object .
eq makeAtk(Q, t, c) =
   < Q : ATK | time: t, cost: c, agents: 1 ,
               acctime: 0, acccost: 0,
               stat: Unknown > .
--- build. an OR gate: ID, children, time and cost
op makeOr : Qid NodeSet Nat Nat -> Object .
eq makeOr(Q, S, t, c) =
    < Q : OR | time: t , cost: c, agents: 0 ,
               acctime: 0, acccost: 0 ,schd: S,
               stat: Unknown > .
\end{maude}

Note that a leaf attack requires one agent and the number of agents for
the \code{OR} gate is initially zero. That value will be updated as explained
below.

An equational theory is executable only if it is
terminating, confluent and sort-decreasing \cite{DBLP:conf/maude/2007}. Under
these conditions, the mathematical meaning of the equality $t \equiv t'$
coincides with the following strategy: reduce $t$ and $t'$ to their unique (due
to termination and confluence) normal forms $t_c$ and $t'_c$ using the
equations in the theory as \emph{simplification rules} from left to right.
Then, $t \equiv t'$ iff $t_c =_B t_c'$ (note that $=_B$, equality modulo $B$,
is decidable since a finitary matching algorithm for $B$ is assumed). For
instance, the term \code{makeAtk('A, 3 ,2)} can be reduced to the normal form
\lstinline[mathescape]!<'A : ATK | time: 3, agents: 1, stat: Unknown, ...> !
using the equations above.

The Maude's theory \code{CONFIGURATION} defines the sort
\code{Configuration} as a set of objects concatenated
with the empty syntax (an associative and commutative operator with
\code{none} as identity). Hence, the term $t_{GA}$ below,
with sort \code{Configuration}, encodes the subtree \code{GA} in
\Cref{fig:adt:treasure}.

\begin{maude}

  --- t_GA (subtree GA)
  MakeAtk('h,3,500) MakeAtk('e, 10,0)
  MakeOr('GA, ('h, 'e), 0, 0)

\end{maude}

Finally, two additional constructors for the sort \code{Configuration} are defined
in the theory \code{ADTree}:

\begin{maude}

  op {_;_} : Oid Configuration  -> Configuration .
  op {_}   :     Configuration  -> Configuration .

\end{maude}

Given an ADTree $T$, we shall use $\os T \cs$ to denote the corresponding  term
of the form  \code{\{Q,Cnf\}} where  \code{Q} is the root of $T$ and
\code{Cnf} is the set of objects encoding the gates in $T$. As shown in the next
section, the second operator \code{op \{_\}} will be useful to simplify the final configuration
and summarise the results of the analysis.

\subsection{Rewriting semantics for gates}\label{subsec:semantics} Now we focus
on the last component $R$ in the rewrite theory $\cR = (\Sigma, E \uplus B,
  R)$. This is a finite set of of conditional rewriting rules of the form
\(\crl{l(\olist{x})}{r(\olist{x})}{\phi(\olist{x})}\),  specifying a pattern
$l(\olist{x})$ that can match some fragment of the system's state $t$ if there
is a substitution $\theta$ for the variables $\olist{x}$ that makes
$\theta(l(\olist{x}))$ equal (modulo axioms) to that state fragment. If the
condition $\theta(\phi(\olist{x}))$ is true, the new state fragment is
$\theta(r(\olist{x}))$, leading  to a local transition. Hence, rules  define
state transformations modelling the dynamic behaviour of the system (which is not
necessarily deterministic, nor terminating).

Conditions and patterns in rules may considerably affect the performance of a
rewrite theory when it is used to explore all the possible reachable states
from a given term. In this section, we propose rules that are self-explanatory
but that may exhibit unnecessary non-determinism during the search procedure.
Later, we add extra conditions to reduce the search space and improve
efficiency.

\noindent\textbf{Leaves.} Let us start defining the behaviour for the gates
representing leaves of an ADTree, i.e., attacks and defences:

\begin{maude}

--- semantics for attacks
rl [ATKOK] :  < Q : ATK | stat: Unknown, ats > =>
              < Q : ATK | stat: Succeed, ats >  .
rl [ATKNOK]:  < Q : ATK | stat: Unknown, ats > =>
              < Q : ATK | stat: Fail,    ats >  .
\end{maude}

These are unconditional ($\phi = true$) rules and then, $\phi$ is omitted.
\code{Q} (resp. \code{ats}) is a logical variable of sort \code{Qid} (resp.
\code{AttributeSet}, a set of attributes). These rules change the state of an
attack currently in state \code{Unknown} to either \code{Succeed} or
\code{Fail}. For instance, consider the term $t_{GA}$ (of sort
\code{Configuration}) above. Due to the structural axioms governing the
juxtaposition operator (\code{[assoc comm id: none]}), these two rules can be applied in
two different positions (local fragments) of the system represented by
$t_{GA}$. More precisely, the rules \code{[ATKOK]} and \code{[ATKNOK]} can be
applied by either substituting the variable \code{Q} with the term \code{'h}
(and \code{ats} with \code{time: 3, cost: 500,...}) or substituting
\code{Q}
with \code{'e}. Hence, the term $t_{GA}$ can be rewritten in two steps into
four possible configurations where: both attacks fail, one of the attacks
succeeds and the other fails, or both attacks succeed. That is, all the
possible outcomes for the attacks are covered.

The rules for defences are defined similarly:
\begin{maude}

--- semantics for defences
rl [DEFOK] :  < Q : DEF | stat: Unknown, ats > =>
              < Q : DEF | stat: Succeed, ats >  .
rl [DEFNOK] : < Q : DEF | stat: Unknown, ats > =>
              < Q : DEF | stat: Fail,    ats >  .
\end{maude}

\noindent\textbf{Gates.} Let us start with the rules for the \texttt{OR} gate:

\begin{maude}

rl [OR] :
< Q : OR | schd: (o, S), stat: Unk., used: U, ats>
< o : C  |               stat: Succ., ats' >  =>
< Q : OR | schd: empty , stat: Succ., used: (U,o),
                         accumulate(ats,ats') >
< o : C  |               stat: Succ., ats' >
\end{maude}



The left-hand side (LHS) of the rule matches a fragment of the global system containing
two objects: an \code{OR} gate and an object \code{o} of any class (\code{o}
and \code{C} are variables of sort \code{Oid} and \code{Cid}
respectively). The term \code{(o, S)}, where \code{S} has sort
\code{NodeSet}, is a set. Hence, this rule applies to any of the children (in
state \code{Succeed}) of the gate. The right-hand side (RHS) dictates the
new state: the \code{OR} gate moves to the state \code{Succeed}; the node
\code{o} is added to the attribute \code{used}, witnessing that  \code{o} is
required to perform the attack \code{Q}; and the attributes for time, cost and
the number of agents in \code{o} are accumulated in \code{Q}. This is the
purpose of the function \code{accumulate} that computes the new values
from  the attributes  of \code{Q} (\code{ats}) and those of \code{o} (\code{ats'}).
The new values for time and cost result from adding the time and cost
accumulated in the children \code{o} with the time and cost of the gate
\code{Q}. Moreover, the number of agents needed to
perform \code{Q} is set to the number of agents needed to perform \code{o}.
This is an upper bound for the number of agents needed, where one of the agents
working on the subtree \code{o} can complete \code{Q}.

Now we consider two rules for handling the cases when one of the children
of the \code{OR} gate fails and where there are no more children to be
considered:

\begin{maude}

  rl [OR] :  --- failing child
  < Q : OR | schd: (o, S), stat: Unknown,  ats >
  < o : C  |               stat: Fail,     ats' > =>
  < Q : OR | schd: S,      stat: Unknown,  ats >
  < o : C  |               stat: Fail,     ats' > .
  rl [OR] :  --- no more children
  < Q : OR | schd: empty, stat: Unknown, ats > =>
  < Q : OR | schd: empty, stat: Fail,    ats > .
\end{maude}

The first rule discards a failing child of the \code{OR} gate. The second rule
changes the state of the gate to \code{Fail} when there are no remaining
children. With these rules, the term $t_{GA}$ can be rewritten into three
possible configurations where the gate \code{GA}: fails (when both \code{h} and
\code{e} fail); succeeds with total time $3$ (when \code{h} succeeds,
regardless the state of \code{e}); and succeeds with total time $10$ (when
\code{e} succeeds).

The rules for the (parallel) \code{AND} gate are defined as follows:

\begin{maude}

rl [AND] : --- succeeded child
 < Q : AND|schd: (o, S), stat: Unk.,  used: U, ats>
 < o : C  |              stat: Succeed,  ats'  > =>
 < Q : AND|schd: S,      stat: Unknown,  used:(U,o),
                         acc-max(ats,ats') >
 < o : C  |              stat: Succeed,  ats'  > .
rl [AND] :  --- failing child
 < Q : AND | schd: (o, S), stat: Unknown,  ats >
 < o : C   |               stat: Fail,     ats' > =>
 < Q : AND | schd: empty , stat: Fail,     ats > .
rl [AND] :  --- no more children
 < Q : AND | schd: empty, stat: Unknown, ats > =>
 < Q : AND | schd: empty, stat: Succeed, ats > .
\end{maude}

In the first rule, the operator \code{acc-max} accumulates the \code{time} attribute  by using the
function \code{max}. That is, the \code{AND} gate computes the maximal value
among the time needed to perform the attacks in each of the children of
\code{Q}. On the contrary, the number of agents is accumulated by adding the
value of the attribute \code{agents} of \code{o} and \code{Q}. Intuitively,
since the children of \code{Q} can be executed in parallel (and in any order),
an upper bound for the number of agents needed in \code{Q} is the sum of the
agents needed for each of \code{Q}'s children.
In the second rule, as expected, a failure of one of the children implies the
failure of the gate.  In the third rule, when all the children
succeed (and \code{schd} is empty) the gate succeeds.

The behaviour of the sequential gate is specified as follows:

\begin{maude}

rl [SAND] :
 < Q : SAND|lchd: (o L), stat: Unk., used: U, ats >
 < o : C   |             stat: Suc.,       ats'> =>
 < Q : SAND|lchd: L    , stat: Unk., used: (U,o)
                         accumulate(ats, ats') >
 < o    : C|             stat: Suc.,  ats' > .
\end{maude}

The term \code{(o L)} is a list and this rule only
matches a state where the first child of the gate is in state \code{Succeed}.
Similar rules to those presented for the \code{AND} gate handling the cases for
a failing child and an empty list of children are also part of the
specification and omitted. The attribute time is accumulated in this case
by adding the values in $o$ and $Q$. For the number of agents, the value is
accumulated using the function \code{max}: the attack is sequential and the
number of agents needed in $Q$ is bound by the child that requires more agents.

The next rules give meaning to the \code{NOT} gate, used to
model the gates \code{CAND}, \code{NODEF} and \code{SCAND}
in Figure \ref{fig:adt:constructs}:

\begin{maude}

rl [NOT] :
 < Q : NOT | lchd: o, stat: Unknown, ats  >
 < o : C   |          stat: Succeed, ats' > =>
 < Q : NOT | stat: Fail,    acc-def(ats) >
 < o : C   | stat: Succeed, ats' > .
rl [NOT] :
 < Q : NOT | lchd: o, stat: Unknown, ats  >
 < o : C   |          stat: Fail,    ats' > =>
 < Q : NOT | stat: Succeed, acc-def(ats) >
 < o : C   | stat: Fail,    ats' > .
\end{maude}

As expected, if the (unique) child of a \code{NOT} gate succeeds, the gate
fails and vice-versa. The time, cost and number of agents are  accumulated in a
different attribute (\code{acc-def}) since those correspond to the resources
for a defence (and not for an attack).

We add an extra rule whose unique purpose is to summarise the results of the
analysis:

\begin{maude}

rl [END] :
 {Q ;< Q : C | stat: Succeed,
               agents: a, acctime: t, ats > Cnf } =>
{    < Q : C | agents: a, acctime: t >
     < gates: attacks((< Q : C | ats > Cnf )) >
     < defences: act-defences(Cnf) > } .
\end{maude}

This rule is enabled only when the root of the tree \code{Q} is in state
\code{Succeed}. All the attributes but the accumulated time and the number of
agents are discarded. The nodes of the tree (\code{Cnf}) but the root are also
discarded. Two new objects are created, namely  \code{gates} and
\code{defences}, that store the set of attacks and defences
enabled in the final configuration. Such sets are computed with the aid of the
operators  \code{attacks} (that uses the attribute \code{used} in the gates)
and \code{act-defences}. Note that the shape of the configuration has changed,
from \code{\{Q;Cnf\}} to \code{\{Cnf\}} (see the operators defined in the end of \Cref{subsec:eq}).\\

\noindent\textbf{Exploring the search space.} A rewrite theory  $\mathcal{R}$
proves sequents of the form $\mathcal{R} \vdash t \redi^{*} t'$ meaning that
the term $t$ rewrites in zero or more steps into $t'$. Here, we
are interested in proving sequents of the form $\mathcal{R} \vdash t \redi!~
  t'$ meaning that $t\redi^{*}t'$ and $t'$ cannot be further rewritten.

Let us call $\RADT$ the rewrite theory defined above that represents the
state of an ADTree and its execution. For an ADTree $T$, if
$\RADT \vdash \os T\cs \redi!~ t'$ then
$t'$ can be either a configuration
where the root node $Q$ fails (and the other gates are in a state different
from \code{Unknown}) or a term of the form
\[\small
  \begin{array}{ll}
    \{\conf{Q : C \mid agents: a, acctime: t}
    \conf{\textit{gates}: SA} \conf{\textit{defences}: SD} \}
  \end{array}
\]

where $a$ and $t$  are, respectively,  the upper bound for the number of agents
and the time needed to perform the root attack $Q$. Moreover, $SA$ and $SD$
are, respectively, the set of enabled attacks and defences in the final
configuration. For now on, the term above will be written as  $[a,t,SA, SD]$.

\begin{example}\label{ex:rew}

  Let $T$ be the ADTree in Figure \ref{fig:adt:treasure} and $t_{TS} = \os
    T\cs$.   Using the above defined rewrite theory, the Maude's command
  \code{search t-TS =>! Cnf:Configuration} finds four (distinct) final
  configurations corresponding to the two possible outcomes of the defence $p$
  and the choice of the attack used in the gate $GA$. In the two non-failing
  configurations, $p$ is not enabled. In one of them, $h$ is chosen and the
  total time for the attack is $125$. In the other, $e$ is executed with total
  time $132$.

\end{example}

\begin{theorem}[Correctness]\label{th:correct}

  Let $T$ be an ADTree. Then, $\RADT \vdash \os T \cs \redi! ~ [a,t,SA, SD]$ iff
  there is an attack in $T$ of time $t$ where the attacks (resp. defences) in $SA$
  (resp. $SD$) are enabled.

\end{theorem}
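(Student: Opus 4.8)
The plan is to prove \Cref{th:correct} by structural induction on the ADTree $T$, with the induction hypothesis phrased as a statement about \emph{subtree encodings}. For a subtree $T'$ with root $Q$, let $C_{T'}$ be the configuration consisting of the objects encoding all nodes of $T'$, each with \code{stat: Unknown}. I would first record two structural facts that hold for every subtree. \emph{Termination}: every rewrite sequence issued from $C_{T'}$ is finite --- each leaf rule (\code{[ATKOK]}, \code{[ATKNOK]}, \code{[DEFOK]}, \code{[DEFNOK]}) fires exactly once per leaf, and each gate rule either strictly decreases the cardinality of the gate's \code{schd}/\code{lchd} attribute or turns a \code{stat: Unknown} into a definite status, so a lexicographic measure decreases at every step. \emph{Locality and progress}: a rule instance touches only the gate it rewrites and one of its children (hence objects within one subtree), so rewritings of sibling subtrees interleave freely; once a node reaches a definite status no rule changes it again; and as long as some node is still \code{Unknown} some rule applies (a still-\code{Unknown} child will eventually resolve by the induction hypothesis, after which its parent can consume it or discard it), so no configuration gets ``stuck'' before every node has a definite status. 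All of this presupposes, as noted in the text, that $(\Sigma, E\uplus B)$ is confluent, terminating and sort-decreasing, so that the rewrite rules operate on well-defined $E$-normal forms modulo $B$.

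The core of the argument is, for each gate type, to match the rewriting behaviour against the ADTree semantics. A \emph{run} of $C_{T'}$ is determined by (i) the Succeed/Fail choice of each leaf rule and (ii) at each \code{OR} gate several of whose children end up Succeed, the choice of which succeeding child the \code{[OR]} success rule consumes. Using the induction hypothesis on the children, I would show that the $\redi!$-normal forms of $C_{T'}$ are in bijection with these runs, and that in such a normal form $Q$ carries exactly the status and the accumulated \code{acctime}/\code{acccost}/\code{agents} and \code{used} attributes dictated by the corresponding ADTree evaluation. For \code{AND}/\code{SAND} this is immediate once one observes that \code{acc-max} accumulates \code{time} by \code{max} (resp.\ \code{accumulate} by sum for \code{SAND}), that the dual accumulation of the agent count is used, and that these operations are associative--commutative, so the order in which children are consumed is irrelevant; moreover \code{[AND]}/\code{[SAND]} resolve to Fail as soon as one child fails and to Succeed only when \code{schd}/\code{lchd} is exhausted, matching conjunction (and, for \code{SAND}, the enforced order). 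For \code{NOT} the single rule inverts the child's status and routes the resources to \code{acc-def}, which is exactly the semantics of negating a defence; since the counter-defence gates \code{CAND}, \code{NODEF}, \code{SCAND} are encoded by combining an attack subtree with \code{NOT} of a defence subtree via \code{AND}, \code{OR}, \code{SAND}, their correctness follows from the preceding cases. The delicate case --- and the step I expect to be the main obstacle --- is \code{OR}: I must verify that its early-commit behaviour neither loses a legitimate outcome nor invents a spurious one. The \code{[OR]} failing-child rule only removes children whose status is Fail, so a child that ends up Succeed is never discarded and stays available to the success rule for the rest of the run; combined with progress, this yields precisely ``$Q$ succeeds via child $o$ iff $o$ is a child that succeeds in the corresponding evaluation'', and $Q$ fails iff all children fail (rule \code{[OR]} with empty \code{schd} can only fire after every child has been removed as Fail). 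Completeness --- every ADTree evaluation is realised by some sequence --- then follows by scheduling the leaf rules to produce the desired valuation and firing \code{[OR]} on the desired branch; soundness --- every normal form arises from an evaluation --- follows by reading the run back off the sequence.

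Finally I would handle the root. Applying the above to $T$ itself gives two cases. Either $Q$ terminates in status Fail: then no rule applies, the result is not of the shape $[a,t,SA,SD]$, and the corresponding ADTree evaluation is not a successful attack, so both sides of the equivalence are vacuous there. Or $Q$ terminates in status Succeed: then rule \code{[END]}, which by its pattern $\{Q;\dots\}$ is enabled only on the root, fires exactly once, discarding the non-root objects, keeping $Q$'s \code{agents} and \code{acctime}, and computing $SA=\code{attacks}(\dots)$ from the \code{used} attributes propagated bottom-up by the gate rules and $SD=\code{act-defences}(\dots)$; since the configuration shape has changed to $\{\dots\}$, no further rule applies, so $[a,t,SA,SD]$ is a genuine $\redi!$-normal form. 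Combining the two directions, $\RADT\vdash \os T\cs\redi! [a,t,SA,SD]$ holds iff there is an ADTree evaluation in which the root attack succeeds with time $t$ (the value accumulated in \code{acctime}), with $SA$ the enabled attacks and $SD$ the enabled defences; the component $a$ is the by-product upper bound on the number of agents discussed above, whose minimisation is the subject of \S\ref{sec:num-agents} and is not claimed here.
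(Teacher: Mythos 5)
Your proposal is correct and reaches the theorem by essentially the same correspondence between rule applications and gate semantics as the paper, but you organise it differently: you run a structural induction on the tree, characterising the normal forms of each subtree's configuration (with explicit termination and progress lemmas, and a careful treatment of the early-commit behaviour of the \texttt{OR} rule and of leaves left \texttt{Unknown} in discarded branches), and only then dispatch the root via \texttt{[END]}. The paper instead argues directly on the derivation: for the forward direction it invokes the invariant that whenever a gate's accumulated time is updated it is updated correctly (e.g.\ sum for \texttt{SAND}, max for \texttt{AND}), so the attack can be read off the sequence of rule applications; for the converse it constructs one specific derivation by firing \texttt{[ATKOK]}/\texttt{[ATKNOK]} (and the defence rules) according to $SA$ and $SD$, which then forces the gate rules — exactly your completeness step. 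Your version buys a more self-contained argument (why normal forms exist, why \texttt{[END]} is the last rule, why no successful branch is lost at an \texttt{OR}), at the cost of length; the paper's derivation-level invariant is terser and leaves those points implicit. Your closing remark that $a$ is only an upper bound on the number of agents, not part of the claim, matches the paper's separate treatment of that issue.
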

\begin{proof}

  ($\Rightarrow$) We must have  $\os T \cs \redi^{*} t' \redi [a,t,SA,SD]$
  where the last rule applied is necessarily  \code{[END]}. Consider the
  derivation $\os T \cs \redi^{*} t'$ where the rules for the different gates
  are applied. An invariant in each step of such a derivation is that when the
  accumulated time attribute is modified in a gate, it is computed correctly. For
  instance, when the rule \code{[SAND]} is applied, the accumulated time is the
  sum of the time needed to perform each of the children of the gate.
  Following the rules applied in the derivation, we can reconstruct the attack
  in $T$.

  \noindent ($\Leftarrow$) Consider the particular sequence of rewriting where the
  rule \code{[ATKOK]} is applied in all the attack leaves in $SA$ and \code{[ATKNOK]}
  in the others. Similarly for the defences in $SD$. This completely
  determines the way the rules for the gates need to be applied, thus
  reproducing the same attack.
\end{proof}

As illustrated in Example \ref{ex:rew}, we can use the search facilities
in Maude to list all the final (successful) configurations to perform an
attack and find the minimal time.

\begin{theorem}[Optimal time]
  Let $T$ be an ADTree. If the minimal time to perform the main attack in $T$
  is $t$, then there exists $a$, $SD$ and $ST$ s.t.
  $\mathcal{R} \vdash \os T\cs \redi{!} ~[a,t,SA,SD]$.
\end{theorem}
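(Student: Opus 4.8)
The plan is to obtain this as an almost immediate corollary of the Correctness theorem (\Cref{th:correct}), which already pins down exactly which normal forms $[a,t,SA,SD]$ are reachable from $\os T\cs$ in $\RADT$. The only genuine work is the observation that a \emph{minimal} attack time is actually realised by a concrete attack, so that \Cref{th:correct} can be applied to it.

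First I would note that the hypothesis presupposes that $T$ admits at least one successful attack on its root, and that $t$ is the minimum of the set $\mathcal{T}\subseteq\mathbb{N}$ of times of all such attacks. Since $\mathcal{T}$ is a non-empty set of natural numbers it has a least element, so the minimum $t$ is attained: there is a specific successful attack of time exactly $t$, which fixes a set $SA$ of enabled attack leaves and a set $SD$ of enabled defences. This is the single place where minimality is used, and it is immediate because the time metric takes values in $\mathbb{N}$, which is well-ordered.

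Next I would feed this attack into the ($\Leftarrow$) direction of \Cref{th:correct}. Explicitly, one builds the rewrite derivation that applies \code{[ATKOK]} at each leaf of $SA$ and \code{[ATKNOK]} at the remaining attack leaves, and \code{[DEFOK]}/\code{[DEFNOK]} at the defences in/outside $SD$; once every leaf has a definite status, the gate rules are forced, propagating statuses upward and accumulating the \code{time}/\code{cost}/\code{agents} attributes according to the same invariant used in the proof of \Cref{th:correct}. Because the chosen attack is valid, the root ends in state \code{Succeed}, so \code{[END]} fires and produces a configuration of the form $[a,t,SA,SD]$ with $a$ the number of agents computed along the way — the required witness. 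That this derivation reaches a \emph{normal form} (i.e.\ $\redi!$, not merely $\redi^{*}$) is already part of what \Cref{th:correct} provides: each gate rule either resolves an \code{Unknown} status or shrinks an \code{schd}/\code{lchd} collection, so no infinite derivation exists, and after \code{[END]} no rule matches the summarising configuration.

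I do not expect a real obstacle: the only substantive step is the attainment observation of the second paragraph, and it is trivial. The one thing to be careful about is the identification of notions — ``the main attack of $T$ in minimal time $t$'' in the statement is the same object as ``an attack in $T$ of time $t$'' in \Cref{th:correct}, and the $SA$, $SD$ in the conclusion (the statement's ``$ST$'' being a slip for $SA$) are just those of the chosen minimal attack. Optionally, combining both directions of \Cref{th:correct} shows in addition that $t$ is the least time among all reachable $[a',t',\cdot,\cdot]$, which is what makes the Maude search in Example~\ref{ex:rew} return the optimum; but that sharper statement is not needed here.
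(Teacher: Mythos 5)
Your proposal is correct and follows the paper's own route: the paper proves this theorem simply as ``Immediate from \Cref{th:correct}'', and your argument is exactly that — applying the ($\Leftarrow$) direction of the Correctness theorem to a concrete attack realising the minimal time. The extra details you supply (attainment of the minimum, termination in a normal form, the $ST$/$SA$ typo) are sound elaborations of the same one-line proof.
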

\begin{proof}
  Immediate from \Cref{th:correct}.
\end{proof}

Unfortunately, this procedure does not allow for finding the minimal
number of agents but only an upper bound for it. The reason is that the
operator \code{acc-max} (see rule \code{[AND]}) sums the
number of agents needed for each child of the gate.
Hence, for instance, this procedure determines that the number
of agents to perform the attack in  \Cref{fig:interrupted}
is $3$ (two agents to perform concurrently $d$ and $e$ and an extra one to perform $b$).
However, there is an attack using only 2 agents (\Cref{ex:search2}). The key point
is that the semantics does not handle the case where an
agent can be shared between different branches of the tree.

\begin{theorem}[Upper bound for the number of agents]
  Let $T$ be an ADTree. If $n$ agents can perform an attack on $T$ with time $t$,
  then there exists $a$, $SA$ and $SD$ s.t.
  $\mathcal{R} \vdash \os T\cs \redi^{*} [a,t,SA,SD]$
  and $n \leq a$.
\end{theorem}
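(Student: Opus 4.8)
The plan is to first extract the required rewrite sequence from the given attack by invoking \Cref{th:correct}, and then to bound $a$ from below by a structural induction that tracks exactly how $\RADT$ accumulates the \code{agents} attribute. Concretely, from the hypothesised attack of time $t$ I would read off the set $SA$ of attack leaves it makes succeed and the set $SD$ of defences it enables. By the ($\Leftarrow$) direction of \Cref{th:correct}, there is a derivation $\RADT \vdash \os T\cs \redi^{*} [a,t,SA,SD]$: once $SA$ and $SD$ are fixed, the applicable rules (\code{[ATKOK]}/\code{[ATKNOK]}, \code{[DEFOK]}/\code{[DEFNOK]}, then the gate rules) are forced up to the order of independent steps, and the final step is \code{[END]}; the accumulated-time invariant established in the proof of \Cref{th:correct} guarantees that the time component really is $t$. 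It then remains to prove $n \le a$.

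For the lower bound on $a$ I would argue by induction on the subtree rooted at a node $u$ that reaches state \code{Succeed} along this run, showing $a(u) \ge m(u)$, where $a(u)$ is the value of the \code{agents} attribute of $u$ at that moment and $m(u)$ is the least number of agents with which the attack restricted to that subtree can be carried out within the makespan computed for it. The cases mirror the rewrite rules: for an attack leaf of $SA$, both quantities equal $1$; for an \code{OR} gate, the run (rule \code{[OR]}) and any schedule commit to the same succeeding child $o$ — forced by $SA$ — and the gate adds no agents of its own, so $a(u)=a(o)\ge m(o)=m(u)$; for a \code{SAND} gate the children run one after another, so one may reuse the largest team and $m(u)=\max_i m(T_i)$, which is matched by the \code{max} in \code{acc-max}, giving $a(u)=\max_i a(T_i)\ge\max_i m(T_i)=m(u)$; for an \code{AND} gate, running the children with pairwise disjoint teams of sizes $m(T_i)$ is a legal schedule of the right makespan, hence $m(u)\le\sum_i m(T_i)$, and since rule \code{[AND]} accumulates \code{agents} by summation, $a(u)=\sum_i a(T_i)\ge\sum_i m(T_i)\ge m(u)$; finally a \code{NOT} subtree (the defensive branch of a \code{CAND}/\code{NODEF}/\code{SCAND}) is carried out by the defender and its resources are diverted into \code{acc-def}, so it contributes $0$ to both sides. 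Instantiating the claim at the root yields $a \ge m(\mathit{root})$; since the number of agents of interest is the least one realising an attack of time $t$ (as recalled just before the theorem, $\RADT$ only provides an upper bound for it), taking that value for $n$ gives $n \le a$.

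The main obstacle I anticipate is not any individual case but pinning down the external notion of ``$n$ agents can perform an attack'' tightly enough to line it up with $\RADT$: one must fix, once and for all, what a schedule is — in particular the ``same origin $\Rightarrow$ same agent'' discipline inherited from \Cref{sec:algo}, which excludes padding a schedule with spurious agents or relaying a single node across several of them — so that $m(u)$ is well defined and the compositional (in)equalities $m(\text{SAND})=\max_i m(T_i)$, $m(\text{AND})\le\sum_i m(T_i)$ and $m(\text{OR})=m(o)$ hold with the makespans coinciding with those accumulated by the rewrite rules. Once that setup is in place, the induction and the appeal to \Cref{th:correct} are routine.
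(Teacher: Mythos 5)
Your proposal is correct and takes essentially the same route as the paper, whose own proof is just ``similar to the proof of the correctness theorem'': extract the derivation $\os T\cs \redi^{*} [a,t,SA,SD]$ via the ($\Leftarrow$) construction of that theorem, and note that the \texttt{agents} attribute is accumulated (sum at \texttt{AND}, max at \texttt{SAND}, inheritance at \texttt{OR}) so that $a$ over-approximates the number of agents required. Your explicit induction $a(u)\ge m(u)$ and your reading of $n$ as the minimal team size realising the time-$t$ attack merely spell out what the paper leaves implicit.
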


\begin{proof}

  Similar to the proof of \Cref{th:correct}.
\end{proof}

\subsection{Minimal set of agents}\label{sec:num-agents}

This section proposes a second rewrite theory $\RADTA$ useful for finding the
minimal set of agents to perform an attack. The starting point is a new
constructor for the sort \code{Configuration} with the following attributes:

\begin{maude}

{agents:_      --- schedule
 global-time:_ --- elapsed time
 max-time:_    --- max time for the attack
 enabled:_     --- set of enabled attacks
 disabled:_    --- atks. that cannot be performed now
 system:_      --- representation of the system/gates
}
\end{maude}

The first attribute is a list of terms of the form \code{[L] :: N} where
\code{L} is a list of node identifiers and \code{N} a natural number.
The term \code{([ a b ] :: 3) ( [ c ] :: 0 )} represents a
scenario with two agents: the first one has already performed the attack $a$
and she is currently working on $b$ with remaining duration $3$;
and the
second agent has already performed $c$ and she is currently free ($N=0$).
The attribute \code{global-time} is a global clock indicating the current
time-unit. \code{max-time} is the maximal time the agents have to perform the
attack, and its value will be initialised with the time computed with the theory
$\RADT$. The set $SA$, computed by $\RADT$, is partitioned into two sets,
namely, \code{enabled}  and \code{disabled}. All the non-leaf gates are in the
second set as well as the leaves which belong to a subtree that is not the
first child of a sequential gate. The other (leaf) attacks are in the set
\code{enabled}. The last attribute stores the representation of the ADTree
($\os T\cs$).

The following operator will be useful to build the initial configuration:

\begin{maude}

op make-schedule : Nat Nat NodeSet Conf -> Conf .
ceq make-schedule(n, t , S , Sys) =
  { agents: make-agents(n) --- build. the list of ag.
    global-time: 0
    max-time: t
    enabled: intersection(S', S)
    disabled: S \ S'       --- set difference
    system: Sys
} if S' := all-attacks(Sys) .
\end{maude}
where the first two parameters are, respectively, the number of agents
and the total time for the attack. The third parameter is the set of
enabled attacks and the last parameter the representation of the
ADTree.

In what follows, we define rules to non-deterministically assign
attacks to agents, move attacks from the set \code{disabled} to the set
\code{enabled} and make the global time advance.
Let us start with the rule assigning an attack to an agent. For the
sake of readability, the parts of the configuration not modified by the rule
are omitted:

\begin{maude}

rl [pick] :
{ agents: SL ([L] :: 0) SL'
  enabled: (o, S)
  system: { Q ;  Cnf < o : C | ats , time: t >} } =>
{ agents: SL ([L o] :: t) SL'
  enabled: (S)
  system: { Q ;  Cnf < o : C | ats , time: t >} }  .
\end{maude}

One of the enabled attacks \code{o} is
assigned to a free agent (\code{SL} and
\code{SL'} are lists of terms of the form \code{[L]::N}). After the
transition, the chosen agent is working on \code{o} with duration \code{t}.

It is also possible for an agent to interrupt the current attack she is working
on and pick another (enabled) attack. This is the purpose of the following
rule:

\begin{maude}

rl [inter] :
{ agents: SL ([L o ] :: nt) SL'
  enabled: (o', S)
  system: { Q ;  Cnf < o : C   | ats,  time: t >
                     < o' : C' | ats', time: t' > }
} =>
{ agents: SL ([L o' ] :: t') SL'
  enabled: (o, S)
  system: { Q ;  Cnf < o : C   | ats,  time: nt >
                     < o' : C' | ats', time: t' > }}
\end{maude}

After the transition: the attack \code{o} is back to the set \code{enabled};
the remaining time for \code{o} is updated to \code{nt} in the attribute
\code{system}; and the attack \code{o'} with duration \code{t'} is scheduled.

The next rule models the fact that the time advances for all the (busy) agents:

\begin{maude}

  rl [time] : {agents: SL   global-time: n  }
  => {agents: minus(SL, 1)  global-time: n+1} .
\end{maude}

The function \code{minus} simply decrements by $1$ the time needed to finish
the current task for each busy agent. Since time advances by one unit
and agents are free to interrupt their current task, these rules effectively
model the preprocessing proposed in \Cref{sec:preprocess}.
Now, consider the two rules below:

\begin{maude}

rl [END]: {agents: SL      global-time: n
           enabled: empty  disabled: empty 
           system: { Q ; Cnf <Q:C|stat: Suc., ats >}}
       => { agents: SL } .
crl [FAIL]: { global-time: n max-time: n' } => fail
            if n > n' .

\end{maude}

The rule \code{[END]} finishes the computation when the root of the ADTree is
in state \code{Succeed} and  there are no more pending
attacks to be executed.
The second rule is conditional: if the global time $n$ is greater
than the maximal time $n'$, then the configuration reduces to \code{fail}.
That is, the agents could not meet the deadline for the attack.

To conclude, we introduce rules governing the movement
of attacks between the sets \code{enabled} and \code{disabled}:

\begin{maude}

rl [done] :
{ agents: SL ([L o] :: 0) SL'
  system: { Q ; Cnf < o : C | ats , stat: Unknown > }
} =>
{ agents: SL ([L o] :: 0) SL'
  system: { Q ; Cnf < o : C | ats , stat: Succeed>}} .
rl [active] :
{ enabled: S         disabled: (o, S')
  system: {Q ; Cnf 
           < o:SAND | ats , stat: Unk., lchd: nil > }
} =>
{ enabled: (o, S)    disabled: S'
  system: {Q ; Cnf 
           < o:SAND | ats , stat: Unk., lchd: nil > }
} .

\end{maude}

If an agent has already finished the attack \code{o}, the rule \code{[done]}
updates the state of \code{o} from \code{Unknown} to \code{Succeed}. The second
rule enables the attack \code{o} when it is a sequential gate whose children
have all already been performed (\code{lsch=nil}). Similar
rules are introduced for the other gates.

\begin{figure}[!!htb]
\vspace{-0.45cm}
\centering
\begingroup
\scalebox{0.9}{
  \begin{tikzpicture}
    \node at (0,3)   {\scalebox{.5}{

\begin{tikzpicture}
	[every node/.style={ultra thick,draw=red,minimum size=6mm},
	node distance=1.8cm]

	\node[and gate US,point up,logic gate inputs=nn] (a)
		{\rotatebox{-90}{\leaf{a}}};

	\node[and gate US,point up,logic gate inputs=nn,
			below left of = a] (c)
		{\rotatebox{-90}{\leaf{c}}};
	\draw (a.input 2) -- ([yshift=-0.15cm]a.input 2) -| (c.east);

	\node[state, below left of = a] (b) {\leaf{b}};
	\draw (a.input 1) -- ([yshift=-0.15cm]a.input 1) -| (b);

	\node[state, below left of = c] (d) {\leaf{d}};
	\draw (c.input 1) -- ([yshift=-0.15cm]c.input 1) -| (d);

	\node[state, below right of = c] (e) {\leaf{e}};
	\draw (c.input 2) -- ([yshift=-0.15cm]c.input 2) -| (e);
\end{tikzpicture}}};
    \node at (0,0) [scale=2]{\scalebox{.555}{

\begin{tabular}{l@{~}r}
	\textbf{Name} & \textbf{Time}\\
	\hline
	\leaf{a} &  0~m\\
	\leaf{b} &  2~m\\
	\leaf{c} &  1~m\\
	\leaf{d} &  4~m\\
	\leaf{e} &  3~m\\
\end{tabular}
}};
		\node at (5,1) [text width=5.7cm,align=justify,scale=1/0.9,draw=none]
		{
			\begin{example}\label{ex:search2}

				Consider the ADTree in \Cref{fig:interrupted}. The $\RADT$ theory
				determines that the attack can be performed in $5$ time-units with at most
				$3$ agents. Starting from a configuration where the attribute \code{agents}
				is set to \code{ ([nil]::0) ([nil]::0)([nil]::0)} and  \code{max-time} to
				$5$, we can enumerate all the possible schedules leading to the attack. One
				of these includes the configuration
				\code{(['d 'c 'a]:: 0)} \code{(['e 'b]:: 0)} \code{([ ] :: 0)},
				where the third agent was not assigned any attack.

			\end{example}
		};
	\end{tikzpicture}
}
\endgroup
  \caption{Interrupted schedule example}
  \label{fig:interrupted}
\end{figure}

%
%

In what follows, we use $[n,t,S,T]$ to denote the term
\code{make-schedule(n,t,S,}$\os T \cs$ \code{)}
and $[SL]$ to denote the term \code{\{agents: SL\}}
(see the RHS in rule \code{[END]}).

\begin{theorem}[Correctness]
  Let $T$ be an ADTree.
  $\RADTA \vdash [n,t,S,T] \redi! ~[SL]$ iff
  there is an attack in $T$ with $n$ agents and time $t$
  where all the attacks in $S$ are performed.
\end{theorem}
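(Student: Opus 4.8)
The plan is to prove the two implications by a tight correspondence between finite derivations of $\RADTA$ and \emph{schedules}. By a schedule I mean an assignment, to each of the $n$ agents, of a sequence of nodes of $T$ together with the time unit at which it (re)starts working on each of them, such that: no agent handles two nodes simultaneously; the precedence induced by the tree (a gate becomes available only once the children required for its success have been carried out) and by \gateSAND gates is respected; every node of $S$ is worked on for its full, unit-normalised duration; and the root reaches state \texttt{Succeed} by time $t$. The unit-step rule \code{[time]} together with the possibility of interrupting a task via \code{[inter]} is precisely the operational counterpart of the time normalisation of \Cref{sec:pre:norm} and of the sequencing of \Cref{sec:pre:sched}, so throughout we argue on the (implicitly) preprocessed tree and transfer the equivalence back to $T$ itself using \Cref{prop:pre:time,prop:pre:sand,prop:pre:zero}.

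For $(\Rightarrow)$, suppose $\RADTA \vdash [n,t,S,T] \redi! [SL]$. The only rule whose right-hand side has the shape $[SL]$ is \code{[END]}, while \code{[FAIL]} rewrites to the distinct normal form \code{fail}; hence the last step of the derivation is \code{[END]}, which can fire only when the root of $T$ is \texttt{Succeed} and both the \texttt{enabled} and \texttt{disabled} sets are empty. I would then proceed by induction along the derivation, maintaining the invariant that the current configuration faithfully encodes a partial schedule of the $n$ agents: the \texttt{agents} attribute lists, per agent, the nodes already completed and the one currently in progress with its remaining duration; \texttt{global-time} is the number of elapsed time units; the \texttt{stat} fields mark exactly the nodes whose subtree has been carried out so far; and the nodes in \texttt{enabled} together with those in \texttt{disabled} form exactly the set of nodes of $S$ not yet scheduled, those in \texttt{enabled} being the ones all of whose prerequisites are already \texttt{Succeed}. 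One checks that every rule preserves this: \code{[pick]} and \code{[inter]} extend an agent's timeline, \code{[time]} advances the clock, and \code{[done]} together with the gate-activation rules (\code{[active]} and its analogues for \code{AND}, \code{OR} and \code{NOT}) propagate completions and migrate newly ready nodes from \texttt{disabled} to \texttt{enabled}. From $SL$ and the trace one reads off a schedule: agent $i$'s timeline is the $i$-th component of $SL$, and start times are recovered from the \code{[pick]}/\code{[inter]}/\code{[time]} events. It uses $n$ agents; since \code{[FAIL]} never fires, \texttt{global-time} never exceeds $t$, so the makespan is at most $t$; since \texttt{enabled} and \texttt{disabled} are both empty at the end, every node of $S$ has been picked and fully executed; and since the root is \texttt{Succeed}, this is a genuine successful attack. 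Hence there is an attack in $T$ with $n$ agents and time $t$ performing the attacks of $S$.

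For $(\Leftarrow)$, take such an attack and view it as a schedule in the above sense. I would replay it step by step, keeping the same invariant. At each time unit $k=0,1,\dots$, first fire \code{[done]} for every node finishing exactly at $k$, then the gate-activation rules for every node whose prerequisites have just become \texttt{Succeed} (possible precisely because the schedule respects precedence, so each node it schedules has first become \texttt{enabled}), then \code{[pick]} — or \code{[inter]}, when the schedule resumes an interrupted node — for every agent that begins a task at time $k$, and finally \code{[time]}. The invariant guarantees that every node picked is indeed in \texttt{enabled} at that moment. When the last node completes, the root becomes \texttt{Succeed} and both \texttt{enabled} and \texttt{disabled} become empty, so \code{[END]} fires and yields a term $[SL]$; and since the makespan is at most $t$, which is the value of \texttt{max-time}, \texttt{global-time} never reaches $t+1$ and \code{[FAIL]} is never enabled along this derivation. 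Hence $\RADTA \vdash [n,t,S,T] \redi! [SL]$. (Termination of the theory is irrelevant here: $[SL]$ is a normal form, and $\redi!$ merely asserts that some derivation reaches it.)

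The step I expect to be the main obstacle is making the invariant precise and verifying its preservation by every rule, in particular the clause that \texttt{enabled} is always exactly the set of not-yet-scheduled nodes of $S$ all of whose prerequisites are \texttt{Succeed}. This rests on the gate-activation rules, which the excerpt displays only for \gateSAND (rule \code{[active]}) and otherwise leaves as ``similar for the other gates''; one must write them out for \code{AND}, \code{OR} and \code{NOT} and check that each fires exactly when the corresponding success or failure condition of the gate is met and that they interact correctly with \code{[done]} and with the list/set bookkeeping of the \texttt{lchd} and \texttt{schd} attributes. A secondary obstacle is the bridge to the original, non-normalised $T$: one has to argue that the per-unit \code{[time]} rule and \code{[inter]} realise every schedule of the normalised tree, after which \Cref{prop:pre:time,prop:pre:sand,prop:pre:zero} transport the statement back to $T$.
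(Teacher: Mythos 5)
Your proposal is correct and follows essentially the same route as the paper, whose proof is only a brief sketch invoking ``the close correspondence of steps in the attack and rules in $\RADTA$'' to rebuild the attack from the derivation and vice-versa. Your two directions (reading a schedule off the derivation via a configuration invariant, and replaying a given schedule rule by rule, with \code{[END]} as the necessary last step and \code{[FAIL]} excluded by the deadline) are exactly that correspondence, just spelled out in more detail than the paper provides.
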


\begin{proof}

  As in  \Cref{th:correct}, the close correspondence of steps in the
  attack and rules in $\RADTA$ allows us to rebuild the attack in
  $T$ from the derivation in $\RADTA$ ($\Rightarrow$) and
  vice-versa ($\Leftarrow$).
\end{proof}

\subsection{Heuristics and strategies}\label{subsec:heuristics}

As illustrated  in \Cref{ex:rew,ex:search2}, it is
possible to explore the reachable state space generated from a given term.
The \code{search} command uses a breadth-first strategy: for each node of
the search tree, all the rules, with all possible
matchings, are applied to produce the next level in the search tree.
This guarantees completeness: if $\mathcal{R} \vdash t \redi^* {t'}$ then the
search command will eventually find~$t'$.

The search space generated by terms in the theories $\RADT$ and $\RADTA$ is
certainly finite but it can grow very fast, especially in $\RADTA$. Hence, for
more complex ADTrees, the search procedure will not terminate in a reasonable
time. In this section we show how to control the non-determinism in the
proposed theories. The result is a decision procedure that can be effectively
used in the case studies presented in \Cref{sec:expe}.

\noindent\textbf{Strategy for $\RADT$}. By inspecting the rules in the theory $\RADT$, we can observe that there are
different sources of non-determinism that can be controlled (without losing
solutions). For instance, the last two rules  for the \code{OR} gate (failing
child and no more children) can be eagerly applied: any interleaving with those
rules will produce the same effect. Note that this is not the case for the
first \code{OR} rule: different choices for matching the pattern \code{(o, S)}
produce different results and all the possibilities need to be explored.
Now consider the rules for the (parallel) \code{AND} gate. A failing child
implies the failing of the gate, regardless of the state of the other children.
Moreover, given two children in state \code{Succeed}, it is irrelevant
which one is considered first in an application of the first rule (pattern
\code{(o,S )}). This is the case since function \code{act-max} accumulates values using
\code{+} and \code{max}, both commutative operations.

Now let us explore the rules for the nodes in the leaves of the ADTree.
Consider \code{[ATKOK]} and \code{[ATKNOK]} and the gate \code{GA} in
\Cref{fig:adt:treasure}. This attack succeeds only if either $h$ or $e$
succeeds. If both succeed, the \code{[OR]} rule discards one of them. In other
words, when the rule \code{[OR]} is applied, the status of the discarded
children \code{S} in the pattern \code{(o, S)} is irrelevant and we can safely
assume that the attacks in the subtree \code{S} were not performed. This means
that we can dispense with the application of \code{[ATKNOK]} and rely on the
rule \code{OR} to explore all the possible configurations. Also,
the rules for defences are both needed: the activation or not of a defence
limits the attacks that can be accomplished.

Strategies \cite{DBLP:journals/entcs/Marti-OlietMV09} provide a mechanism for
controlling the way rules are applied in a given theory. In Maude, this is
implemented with the help of a strategy language that tells the rewriting
engine how to explore the state space. The command \code{srew T using STR}
rewrites the term \code{T} according to the strategy expression \code{STR} and
returns all its possible results. 

The basic building block in the strategy language is the application of a
rule. For instance, the command \code{srew T using OR} will apply the rule
\code{[OR]} in  all possible ways on term \code{T}. As discussed above, if
there are different matchings for the application of \code{[AND]}, all of them
lead to the same result. The strategy \code{one(AND)} applied to a term
\code{T} succeeds if \code{[AND]} matches, possibly in different ways, but
only one matching is considered and the others discarded.

Strategies can be defined
by using constructors similar to regular expressions (see the
complete list in \cite[Section 4]{DBLP:journals/jlap/DuranEEMMRT20}):
\code{idle} (identity); empty set / no solution (\code{fail});
concatenation ($\alpha;\beta$); disjunction ($\alpha \mid \beta$);
iteration ($\alpha^*$); conditional application, $\alpha~?~\beta~:~\gamma$,
where $\beta$ is applied on the resulting terms after the application of
$\alpha$, or $\gamma$ is applied if $\alpha$ does not produce any result.
From these, it is possible to define: $\alpha$~\code{or-else}~$\beta$
that executes $\beta$ if $\alpha$ fails; and the normalisation operator
$\alpha!$ that applies $\alpha$ until it cannot be further applied.

Consider the following strategy:
\begin{maude}

  deter := ( one(ATKOK) or-else one(NOT) or-else
             one(ORD) or-else one(SAND) or-else
             one(PAND) ) ! .
  solve := ( ChoiceOK | ChoiceNOK ) ! ;
\end{maude}


where \code{ORD} refers to the second and third rules for the \code{OR} gate.
The strategy \code{deter} applies the confluent rules until a
fixed point is reached (!).
The strategy \code{solve} first explores all the configurations for the defences (active or
inactive). Then, the confluent rules are eagerly applied.
Next, if the \code{[END]} rule can be applied,
the computation finishes:  the  rules for gates do not apply on the resulting term
on the RHS of \code{[END]}, and a further application of \code{deter} necessarily fails. If this is not the
case, the \code{[OR]} rule is tried. If there are no more \code{OR} gates
in the configuration, the strategy fails.
Otherwise, the \code{OR} rule is applied (considering all possible
matchings) and the confluent rules are used again.

Recall that final/irreducible configurations can be either  $\{C\}$ (RHS in
\code{[END]}) or $\{Q;C\}$ where
the gate $Q$ is in state \code{Fail} and all the other rules are in a state
different from \code{Unknown}.

\begin{theorem}[Completeness]
  $\RADT \vdash \{Q ; C\}  \redi!~ \{C'\} $ iff the
  configuration $\{C'\}$ is reachable from $\{Q ; C\}$ following the
  strategy \code{solve}.
\end{theorem}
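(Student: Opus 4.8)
The plan is to treat the statement as a soundness-and-completeness claim for the strategy \code{solve} with respect to unrestricted rewriting in $\RADT$, restricted to the successful normal forms $\{C'\}$, i.e.\ the terms appearing on the right-hand side of rule \code{[END]}. The ``if'' direction I would dispatch immediately: by the semantics of the strategy language, a \code{solve}-guided derivation from $\{Q;C\}$ is in particular an ordinary $\RADT$-derivation, so reaching $\{C'\}$ gives $\{Q;C\} \redi^{*} \{C'\}$; and $\{C'\}$ is a normal form, since after an application of \code{[END]} the root carries no \code{stat} attribute (hence no leaf or gate rule matches), the summary objects \code{gates} and \code{defences} belong to no gate class, and the outer constructor has become the unary $\{\cdot\}$ instead of the binary $\{\cdot\,;\cdot\}$ (so \code{[END]} itself no longer applies). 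Thus $\{Q;C\} \redi! \{C'\}$.

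The real work is the ``only if'' direction. Given an arbitrary derivation $\{Q;C\} = t_0 \redi t_1 \redi \cdots \redi t_m = \{C'\}$ — whose last step is necessarily \code{[END]} — I would transform it, by permuting and eliminating rule steps, into one that conforms to \code{solve}, keeping the endpoint fixed. The engine is a commutation lemma: the families \code{[ATKOK]}, \code{[NOT]}, the ``failing child'' and ``no more children'' \code{[OR]} rules (\code{ORD}), the \code{[SAND]} rules, and the parallel \code{[AND]} rules (\code{PAND}) form a strongly normalising subsystem — the pair consisting of the number of \code{Unknown}-state nodes and the total length of the \code{schd}/\code{lchd} attributes decreases lexicographically — and each of their instances commutes with every rule instance of $\RADT$. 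The commutation diamonds I would check family by family, using that a failing child forces its gate to \code{Fail} regardless of its siblings (so the failing-child rules and \code{ORD} can always be advanced or postponed), that the accumulators behind \code{accumulate}, \code{acc-max} and \code{acc-def} are built from the associative--commutative \code{+} and \code{max} (so the order in which successful children are folded into a gate is immaterial), and that \code{[ATKOK]} and \code{[NOT]} on a node never block a redex elsewhere. Consequently every occurrence of one of these rules can be pulled to the front of the remaining derivation, and a maximal prefix of them is exactly the normalisation \code{deter!}.

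Two further observations finish the argument. First, \code{[ATKNOK]} is redundant for successful runs: in the encoding an attack leaf never sits under a \code{NOT} gate, so on any run ending in a successful $\{C'\}$ an attack leaf that is not discarded through the set component \code{S} of a first-\code{[OR]} match lies on a chain of \code{AND}/\code{SAND}/\code{OR} gates all of which must succeed, forcing the leaf itself to succeed; adopting the discipline ``fire the first \code{[OR]} rule as soon as an \code{Unknown} \code{OR} gate has a \code{Succeed} child, discarding all its other children'' (sound by the commutations just established) lets me assume no \code{[ATKNOK]} step and, more generally, no rule step inside a subtree eventually discarded by an \code{[OR]} ever occurs — while \code{[DEFNOK]} remains indispensable, since a \code{NOT} gate succeeds only when its defence subtree fails and forcing a composite defence subtree to fail genuinely requires \code{[DEFNOK]} on its leaves (this is precisely why \code{deter} keeps only \code{one(ATKOK)} whereas the non-deterministic prefix of \code{solve} keeps both \code{ChoiceOK} and \code{ChoiceNOK}). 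Second, defence choices can be front-loaded: a defence leaf's state is inspected only by the \code{[NOT]} rule on its parent and no rule re-reads or alters a leaf once it has left \code{Unknown}, so every \code{[DEFOK]}/\code{[DEFNOK]} step commutes to the very beginning of the derivation. Putting these together, the original derivation is reshaped into an exhaustive prefix of \code{ChoiceOK}/\code{ChoiceNOK} steps; then a maximal \code{deter}-normalisation; then, unless the term already has the $\{C'\}$ shape, one first-\code{[OR]} step followed by another \code{deter}-normalisation, iterated; and a final \code{[END]} step — which is exactly the behaviour \code{solve} prescribes in the prose description above. Termination of the loop is clear (each first-\code{[OR]} step strictly lowers the number of \code{Unknown} \code{OR} gates and each inner normalisation terminates), and the endpoint is unchanged because the deleted steps — \code{[ATKNOK]} and steps inside discarded subtrees — contribute neither to the root's accumulated time and agents nor to the \code{used} sets, hence neither to $a$, $t$, $SA$ nor $SD$.

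The step I expect to be the bottleneck is the commutation analysis together with the ``\code{[ATKNOK]} is dispensable'' claim: one has to discharge \emph{every} pair of rule families, in particular overlaps at a single gate (two \code{[AND]} ``successful child'' rules for different children, or an \code{[AND]} ``successful child'' rule against the ``no more children'' rule), and argue rigorously that in a successful run the \code{Fail} status of an attack leaf can never propagate anywhere except into the discarded \code{S} of a first-\code{[OR]} match, so that the early-\code{[OR]} discipline is genuinely without loss of generality. The rest — soundness, the termination bookkeeping, and moving the defence choices to the front — I expect to be routine.
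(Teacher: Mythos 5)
Your proposal is correct and follows essentially the same route as the paper's own argument: front-load the defence choices (they depend on nothing else), use commutativity of \code{+} and \code{max} to permute the confluent gate/leaf rules into a \code{deter}-normalisation interleaved with the genuinely non-deterministic first \code{[OR]} rule, and observe that \code{[ATKNOK]} steps can only occur inside subtrees discarded by an \code{OR} and hence can be dropped without changing the endpoint. Your write-up merely spells out the commutation diamonds, the termination measure, and the soundness (``if'') direction in more detail than the paper's sketch.
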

\begin{proof}
  As explained above, the final outcome of the attack depends on the
  defences and the choices in \code{OR} gates (if any). Consider the rules applied
  in the derivation $\{Q ; C\}  \redi!~ \{C'\}$ (where the last one is
  necessarily \code{[end]}). The activation or not of a defence does not
  depend on any other action (leaf nodes). Hence, we can permute the
  application of those rules to be done at the  beginning of the derivation.
  Due to the commutativity of the operations for accumulating
  values (\code{max} and \code{+}), we can also rearrange the application of
  the rules (except \code{[OR]}) following \code{deter}. 
  Note that the rule \code{[ATKNOK]} may appear in the derivation. However,
  this is only possible in the scope of a subtree  discarded in an \code{OR} gate.
	Hence, we still have a valid derivation without using that rule.
\end{proof}

\noindent\textbf{Non-determinism in  $\RADTA$.} Now let us consider the theory
$\RADTA$ that exhibits many sources of non-determinism. The \code{[pick]} rule
can select any enabled element \code{o} and schedule it for any free (\code{[L]
  :: 0}) agent. Since in the current model agents have the same abilities  to
perform any of the attacks, we may impose an additional restriction in this
rule: all the agents in the list \code{SL} must be working (remaining time different from zero).
Hence, \code{[pick]} will schedule \code{o} to the first free
agent in the list, thus eliminating some (unnecessary) choices.

The rules \code{[pick]} and \code{[time]} can be interleaved in many ways.
One might be tempted to restrict the application of \code{[time]} to
configurations where either there is no enabled activity or where all the
agents are busy. Let us call this strategy PBT (\code{[pick]} before
\code{[time]}). Since $\RADT$ computes an upper bound for the number of agents,
the strategy PBT cannot be used to compute the minimal set of agents: it will
enforce the use of all of them.

An approach to circumvent the problem above is the following. Assume that for
a given ADTree, $\RADT$ finds an attack with a number of agents $n$.  Then,
execute $\RADTA$ with the strategy PBT with a configuration of $i$ agents,
iterating $i$ from $1$ to $n$. The first value for $ i \in 1..n$ that succeeds
will correspond to the optimal number of agents.
The easiest way to enforce PBT is by adding an extra condition to \code{[time]}:
\begin{maude}

crl [time] :
{ agents: SL global-time: n enabled: S } =>
{ agents: minus(SL, 1)      global-time: n + 1
  enabled: S }
if all-busy(SL) or (some-not-busy(SL) and S == empty)
\end{maude}

Hence, time advances only if all the agents are currently working or
the set of enabled attacks is empty. 

There is one extra source of non-determinism that we can control. The
\code{[pick]} rule, in its current form,  can choose any of the enabled
activities. How can we guide such a choice? The answer is in the algorithm in
\Cref{sec:algo}: choose by levels and prioritising the activities with higher
depth. Based on the level and depth, we can define the strict lexicographical
total order $(l,d,id) \prec (l',d',id')$ iff $l<l'$ (first nodes  with higher
levels); or $l=l'$ and $d<d'$ (priority to higher depth); or $l=l'$, $d=d'$ and $id
  <
  id'$ (needed to break ties on activities with the same level and depth). Hence,
the rule \code{[pick]} becomes: 

\begin{maude}

crl [pick] :
{ agents: SL ([L] :: 0) SL'
  enabled: (o, S)
  system: { Q ;  Cnf < o : C | ats , time: t >}
} =>
{ agents: SL ([L o] :: t) SL'
  enabled: (S)
  system: { Q ;  Cnf < o : C | ats , time: t >}
}
if   all-busy(SL)    --- [L]::0 is the fst free ag.
/\   o == max(o, S). --- o is the max wrt <

\end{maude}

Let  $\RADTAP$ be as $\RADTA$ but replacing
\code{[time]} and \code{[pick]} with the conditional rules above.

\begin{theorem}[Correctness]
  Let $T$ be an ADTree and suppose that $\RADT$
  finds an attack with time $t$ and number of agents $n$
  using the set of attacks $S$.
  If $\RADTA \vdash [n,t,S,T] \redi^* [SL] $ and $m$ agents in \code{SL} were not assigned
  any task, then $\RADTAP \vdash [n-m,t,S,T] \redi^* [SL']$
  where $SL'$ is as $SL$ but with the $m$ (unused) agents removed.
\end{theorem}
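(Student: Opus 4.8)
The plan is to transform the hypothesised $\RADTA$ derivation into one that respects the extra side conditions of $\RADTAP$, in the spirit of the completeness result for the \code{solve} strategy: the constraints added in $\RADTAP$ prune only redundant interleavings and choices, never valid schedules. First I would discard the $m$ idle agents: an agent never assigned a task is never matched by \code{[pick]} or \code{[inter]}, and \code{[time]} acts on it only by decrementing a remaining duration that stays $0$, so --- the original derivation using the unconstrained \code{[time]} rule --- deleting these $m$ entries from the initial \code{agents} list disables no step. This already yields a $\RADTA$ derivation $[n-m,t,S,T]\redi^{*}[SL']$ in which $SL'$ is $SL$ with the $m$ idle agents removed and every remaining agent performs at least one attack.

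I would then read off the \emph{schedule} induced by this derivation --- for each agent and each unit time slot, the attack whose unit is being processed (an attack of duration $d$ contributing $d$ such units, exactly as in the preprocessing of \Cref{sec:preprocess}) --- and massage it to the discipline of $\RADTAP$. Two points need care. For the deterministic \code{[pick]}: since all agents have identical abilities, which agent runs which attack is irrelevant up to a permutation of the \code{agents} list, and the assignments performed within one slot may be reordered so that each time the $\prec$-maximal enabled attack goes to the first free agent. For the PBT restriction --- \code{[time]} fires only when all agents are busy or nothing is enabled --- suppose some slot leaves an agent idle while an attack \code{o} is enabled; since \code{o} belongs to $S$, a unit of \code{o} is processed in some later slot, so move that unit forward to the idle slot. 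This respects precedence, because \code{o} being enabled means all of its prerequisite attacks are already in state \code{Succeed}, and it delays no attack, so the makespan does not grow; each such move strictly decreases the sum of the unit start slots, so finitely many moves produce a gap-free, $\prec$-ordered schedule. That schedule is realised by a $\RADTAP$ derivation, giving $\RADTAP \vdash [n-m,t,S,T]\redi^{*}[SL']$: an attack with $n-m$ agents meeting the deadline $t$.

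The main obstacle is the exchange argument just sketched --- that greedily keeping every agent busy, with units chosen in decreasing $(\level,\depth)$ priority, never pushes the makespan past $t$. This is exactly the list-scheduling (HLFNET/SCFNET) optimality underpinning the specialised algorithm of \Cref{sec:algo}, so I would either invoke its correctness as a black box, applied to the \DAG/ that $\RADTA$ unfolds implicitly through \code{[time]} and \code{[inter]}, or reprove it directly by induction on \code{global-time}, the invariant being that every node on a longest remaining path is scheduled as early as its predecessors permit. Two details must also be spelled out: the treatment of \code{[inter]} steps, which is uniform once the schedule is viewed at the granularity of units, and the fact that eager-ification may reassign attacks between agents, so the list $SL'$ in the conclusion is to be read up to a reordering of the agents together with the reassignment of the slots that were idle in $SL$.
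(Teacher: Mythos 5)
Your peripheral reductions are fine: deleting the $m$ never-used agents, permuting agents within a slot to satisfy the first-free-agent condition, and the forward-move argument that removes idle slots to meet the PBT restriction on \code{[time]}. The genuine gap is the step you yourself label ``the main obstacle'': that always granting the $\prec$-maximal enabled attack never pushes completion past the deadline $t$. Within-slot reordering cannot establish the side condition \code{o == max(o, S)} of the modified \code{[pick]}, because the maximum ranges over \emph{all} enabled attacks, including ones the original $\RADTA$ schedule does not touch in that slot; you need a cross-slot exchange, and for it you offer only two unexecuted options. Invoking the specialised algorithm's correctness ``as a black box'' does not work off the shelf: that result (proved in \cite{ICECCS2022}) concerns \textsc{Schedule} on preprocessed \DAG/s achieving the minimal time with the minimal number of agents, whereas here you need a statement about the specific order $\prec$ (level, then depth, then identifier), an arbitrary agent count $n-m$, and a fixed deadline $t$ for the fixed attack set $S$; constructing the correspondence between $\RADTAP$ derivations and runs of \textsc{Schedule} is essentially the missing argument itself. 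The alternative ``induction on \code{global-time}'' is likewise only named, with a one-clause invariant, and your priority exchanges come with no termination measure (your decreasing sum of start slots covers only the forward moves, not swaps).

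The paper closes exactly this hole with a short local argument and thereby avoids rescheduling altogether: if $o' \prec o$ are both enabled, their common ancestor $X$ is necessarily an \code{AND} gate; the remaining time to complete $X$ is bounded by the larger of the remaining path times from $o$ and from $o'$ to $X$, and $o' \prec o$ makes the $o$-side path the longer one; hence a derivation that lets time advance with $o'$ scheduled and $o$ not can no longer meet the deadline for $X$. So every \emph{successful} $\RADTA$ derivation already obeys the restriction imposed in $\RADTAP$, and no exchange argument is needed --- which also yields the conclusion with the stated $SL'$ exactly, rather than ``up to reordering and reassignment of idle slots'' as in your weakened reading. If you want to keep your rescheduling route, you must actually prove the list-scheduling exchange lemma for $\prec$ under deadline $t$; the AND-ancestor bound above is the natural tool with which to do it.
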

\begin{proof}
  Assume that in a given state, there are two enabled attacks $o$ and $o'$ and
  $o' \prec o$. $\RADTA$ may pick either $o$ or
  $o'$ and $\RADTAP$ is forced to pick $o$.
  We  show that $\RADTA$ necessarily chooses $o$.
  Let $X$ be the common ancestor
  of $o$ and $o'$.
  Since both actions are enabled, $X$ is necessarily an \code{AND}
  gate. The  minimal remaining time $mt$ for $X$ is bound by the maximum
  time needed the perform the actions in the path from $o$ to $X$
  (say $t$) and the time needed to perform
  the path from  $o'$ to $X$ (say $t'$).
  Since $o' \prec o$, then $t' < t$.
  Suppose, to obtain a contradiction, that at a given time,
  $o'$ is  scheduled and $o$ is not. When the time advances,
  $t'$ is decremented but $t$ remains the same. Hence, the deadline
  $mt$ for $X$ cannot be met.
\end{proof}

\subsection{Results}

In the repository \cite{adt2maude} of \toolM, the reader can find the
complete specification of the proposed rewrite theories. A script
written in Python, using the bindings for Maude
(\url{https://github.com/fadoss/maude-bindings}), translates the
input format for ADTrees used in \tool and produces a term representing
the tree ($\os T \cs $). Then, the analyses for finding the minimal
time and the optimal schedule are performed. The resulting schedules
coincide with those reported in \Cref{sec:expe}.
Even though the specialised algorithm outperforms Maude in most cases,
\Cref{table:benchmark} shows that the specification
is useful in practice. Additional benchmarks can be found at \url{https://bit.ly/3ONeSzq}.

\begin{table}
  \centering
  \begin{tabular}{lrr}
    \toprule
    \textbf{model}          & \textbf{\tool (ms)} & \textbf{\toolM (ms)} \\
    \midrule
    forestall               & 20.05                           & 160.12                           \\
    \rowcolor{lightgray}
    gain\_admin             & 3342045.64                      & 6129.03                          \\
    icfem2020               & 2.51                            & 180.35                           \\
    interrupted             & 1.31                            & 121.70                           \\
    \rowcolor{lightgray}
    iot\_dev                & 2652.56                         & 629.17                           \\
    last                    & 1.83                            & 129.84                           \\
    scaling                 & 1.22                            & 122.94                           \\
    treasure\_hunters       & 26.70                           & 131.56                           \\
    adtree-d5\_w3\_c10\_AND & 12.41                           & 7007.98                          \\
    adtree-d5\_w4\_c10\_AND & 13.72                           & 5008.19                          \\
    \bottomrule
  \end{tabular}
  \caption{\tool vs. \toolM in benchmarks\label{table:benchmark}}
\end{table}

Being declarative (since behaviour is easily described by rules) and based on a
search procedure, the rewriting logic specification is easily extensible
to consider other constraints and metrics in ADTrees. For instance, the algorithm (and the
optimisation in $\RADTAP$) assumes that agents can interrupt an activity and
start another one. We may add, as an additional attribute, that such an interruption
requires additional time since the tasks are not executed in the same room. It is also
possible to specify different kind of agents where only some of them are
trained for some specific tasks. The
RL approach also opens the possibility of considering  multi-objective optimisations
including the cost, time and number of agents to perform the attack.

\section{Conclusion}
\label{sec:conclusion}

This paper has presented an agents scheduling algorithm that allows for evaluating
attack/defence models. It synthesises a minimal number of agents and their schedule,
providing insight to both parties as to the number of agents and actions
necessary for a successful attack, and the defences required to counter it.
We have also presented an executable  rewrite theory to solve the same problem.
The specialised algorithm inspired some optimisations that allowed us
to reduce the state space and show that the specification can be used in
practice. 

The declarative model in RL opens different alternatives
to consider other constraints and quantitative measures in \ADT/s. 
We thus obtain a complete framework for not only analysis but also synthesis
of agent configurations and schedules to achieve a given goal in a multi-agent
system. Targeting more elaborate goals, expressed in the TATL logic
\cite{KnapikAPJP19}, will allow for analysing more general multi-agent systems
and their properties.
Also, we plan to use rewriting  modulo SMT \cite{DBLP:journals/jlp/RochaMM17}
to encode configurations induced by \gateOR and defence nodes
and perform symbolic analysis \cite{DBLP:journals/jlap/DuranEEMMRT20} on \ADT/s. 



\bibliographystyle{IEEEtran}
\balance
\bibliography{refs}

\begin{thebibliography}{10}
\providecommand{\url}[1]{#1}
\csname url@samestyle\endcsname
\providecommand{\newblock}{\relax}
\providecommand{\bibinfo}[2]{#2}
\providecommand{\BIBentrySTDinterwordspacing}{\spaceskip=0pt\relax}
\providecommand{\BIBentryALTinterwordstretchfactor}{4}
\providecommand{\BIBentryALTinterwordspacing}{\spaceskip=\fontdimen2\font plus
\BIBentryALTinterwordstretchfactor\fontdimen3\font minus
  \fontdimen4\font\relax}
\providecommand{\BIBforeignlanguage}[2]{{%
\expandafter\ifx\csname l@#1\endcsname\relax
\typeout{** WARNING: IEEEtran.bst: No hyphenation pattern has been}%
\typeout{** loaded for the language `#1'. Using the pattern for}%
\typeout{** the default language instead.}%
\else
\language=\csname l@#1\endcsname
\fi
#2}}
\providecommand{\BIBdecl}{\relax}
\BIBdecl

\bibitem{Wooldridge02a}
M.~J. Wooldridge, \emph{An Introduction to Multiagent Systems}.\hskip 1em plus
  0.5em minus 0.4em\relax John Wiley \& Sons, 2002.

\bibitem{KordyMRS10}
B.~Kordy, S.~Mauw, S.~Radomirovi{\'{c}}, and P.~Schweitzer, ``Foundations of
  {ADT}rees,'' in \emph{Proceedings of {FAST}~2010}.\hskip 1em plus 0.5em minus
  0.4em\relax Springer, 2011, pp. 80--95.

\bibitem{Zaruhi_pareto_2015}
Z.~Aslanyan and F.~Nielson, ``Pareto {Efficient} {Solutions} of {ADTrees},'' in
  \emph{Proceedings of {POST} 2015}.\hskip 1em plus 0.5em minus 0.4em\relax
  Springer, 2015, pp. 95--114.

\bibitem{ICECCS2019}
L.~Petrucci, M.~Knapik, W.~Penczek, and T.~Sidoruk, ``Squeezing {S}tate
  {S}paces of {ADT}rees,'' in \emph{Proc. of {ICECCS} 2019}.\hskip 1em plus
  0.5em minus 0.4em\relax {IEEE}, 2019, pp. 71--80.

\bibitem{ICFEM2020}
J.~Arias, C.~E. Budde, W.~Penczek, L.~Petrucci, T.~Sidoruk, and M.~Stoelinga,
  ``Hackers vs. {S}ecurity: {ADT}rees as {A}synchronous {M}ulti-agent
  {S}ystems,'' in \emph{Proc. of {ICFEM} 2020}.\hskip 1em plus 0.5em minus
  0.4em\relax Springer, 2020, pp. 3--19.

\bibitem{meseguer-rltcs-1992}
J.~Meseguer, ``\BIBforeignlanguage{en}{{C}onditional {R}ewriting {L}ogic as a
  {U}nified {M}odel of {C}oncurrency},''
  \emph{\BIBforeignlanguage{en}{Theoretical Computer Science}}, vol.~96, no.~1.

\bibitem{meseguer-twenty-2012}
------, ``{T}wenty {Y}ears of {R}ewriting {L}ogic,'' \emph{Journal of Logic and
  Algebraic Programming}, vol.~81, no. 7-8, pp. 721--781, 2012.

\bibitem{DBLP:journals/jlap/DuranEEMMRT20}
F.~Dur{\'{a}}n, S.~Eker, S.~Escobar, N.~Mart{\'{\i}}{-}Oliet, J.~Meseguer,
  R.~Rubio, and C.~L. Talcott, ``{P}rogramming and {S}ymbolic {C}omputation in
  {M}aude,'' \emph{J. Log. Algebraic Methods Program.}, vol. 110, 2020.

\bibitem{DBLP:conf/maude/2007}
M.~Clavel, F.~Dur{\'{a}}n, S.~Eker, P.~Lincoln, N.~Mart{\'{\i}}{-}Oliet,
  J.~Meseguer, and C.~L. Talcott, Eds., \emph{All About Maude - {A}
  High-Performance Logical Framework}.\hskip 1em plus 0.5em minus 0.4em\relax
  Springer, 2007.

\bibitem{KMRS14}
B.~Kordy, S.~Mauw, S.~Radomirovi{\'{c}}, and P.~Schweitzer, ``{ADT}rees,''
  \emph{Journal of {L}ogic and {C}omputation}, vol.~24, no.~1, pp. 55--87,
  2014.

\bibitem{Gadyatskaya_2016}
O.~Gadyatskaya, R.~R. Hansen, K.~G. Larsen, A.~Legay, M.~C. Olesen, and D.~B.
  Poulsen, ``Modelling {ADTrees} {Using} {Timed} {Automata},'' in
  \emph{Proceedings of {FORMATS} 2016}.\hskip 1em plus 0.5em minus 0.4em\relax
  Springer, 2016, pp. 35--50.

\bibitem{AGKS15}
F.~Arnold, D.~Guck, R.~Kumar, and M.~Stoelinga, ``Sequential and {P}arallel
  {A}ttack {T}ree {M}odelling,'' in \emph{Proceedings of {SAFECOMP}
  2015}.\hskip 1em plus 0.5em minus 0.4em\relax Springer, 2015, pp. 291--299.

\bibitem{GIM15}
M.~Gribaudo, M.~Iacono, and S.~Marrone, ``Exploiting {B}ayesian {N}etworks for
  the {A}nalysis of {C}ombined {A}ttack {T}rees,'' \emph{Electronic Notes in
  Theoretical Computer Science}, vol. 310, pp. 91--111, 2015.

\bibitem{ANP16}
Z.~Aslanyan, F.~Nielson, and D.~Parker, ``Quantitative {V}erification and
  {S}ynthesis of {A}ttack-{D}efence {S}cenarios,'' in \emph{Proceedings of
  {CSF} 2016}.\hskip 1em plus 0.5em minus 0.4em\relax {IEEE}, 2016, pp.
  105--119.

\bibitem{KPCS13}
B.~Kordy, L.~Pi{\`{e}}tre{-}Cambac{\'{e}}d{\`{e}}s, and P.~Schweitzer,
  ``{DAG}-based {A}ttack and {D}efense {M}odeling: {D}on't {M}iss the {F}orest
  for the {A}ttack {T}rees,'' \emph{Computer Science Review}, vol. 13-14, pp.
  1--38, 2014.

\bibitem{WAFP19}
W.~Widel, M.~Audinot, B.~Fila, and S.~Pinchinat, ``Beyond 2014: {F}ormal
  {M}ethods for {A}ttack {T}ree-based {S}ecurity {M}odeling,'' \emph{{{ACM}
  Computing Surveys}}, vol.~52, no.~4, pp. 75:1--75:36, 2019.

\bibitem{FilaW20}
B.~Fila and W.~Widel, ``Exploiting {ADT}rees to {F}ind an {O}ptimal {S}et of
  {C}ountermeasures,'' in \emph{Proc. of {CSF} 2020}.\hskip 1em plus 0.5em
  minus 0.4em\relax {IEEE}, 2020, pp. 395--410.

\bibitem{AAMASWJWPPDAM2018a}
W.~Jamroga, W.~Penczek, P.~Dembinski, and A.~Mazurkiewicz, ``Towards {P}artial
  {O}rder {R}eductions for {S}trategic {A}bility,'' in \emph{Proceedings of
  {AAMAS} '18}.\hskip 1em plus 0.5em minus 0.4em\relax {ACM}, 2018, pp.
  156--165.

\bibitem{CompSched}
T.~L. Adam, K.~M. Chandy, and J.~R. Dickson, ``A {C}omparison of {L}ist
  {S}chedules for {P}arallel {P}rocessing {S}ystems,'' \emph{Communications of
  the {ACM}}, vol.~17, no.~12, p. 685–690, Dec. 1974.

\bibitem{KwokA99}
Y.~Kwok and I.~Ahmad, ``Static {S}cheduling {A}lgorithms for {A}llocating
  {D}irected {T}ask {G}raphs to {M}ultiprocessors,'' \emph{{{ACM} Computing
  Surveys}}, vol.~31, no.~4, pp. 406--471, 1999.

\bibitem{Hu1961}
T.~C. Hu, ``Parallel {S}equencing and {A}ssembly {L}ine {P}roblems,''
  \emph{Operations Research}, vol.~9, no.~6, pp. 841--848, 1961.

\bibitem{PapaYan1979}
C.~H. Papadimitriou and M.~Yannakakis, ``Scheduling {I}nterval-{O}rdered
  {T}asks,'' \emph{{SIAM} Journal on Computing}, vol.~8, no.~3, pp. 405--409,
  1979.

\bibitem{Towsley86}
D.~F. Towsley, ``Allocating {P}rograms {C}ontaining {B}ranches and {L}oops
  {W}ithin a {M}ultiple {P}rocessor {S}ystem,'' \emph{{IEEE} Transactions on
  Software Engineering}, vol.~12, no.~10, pp. 1018--1024, 1986.

\bibitem{ElRewiniA95}
H.~El{-}Rewini and H.~H. Ali, ``Static {S}cheduling of {C}onditional {B}ranches
  in {P}arallel {P}rograms,'' \emph{{Journal of Parallel and Distributed
  Computing}}, vol.~24, no.~1, pp. 41--54, 1995.

\bibitem{NunesL14}
I.~Nunes and M.~Luck, ``Softgoal-based {P}lan {S}election in {M}odel-driven
  {BDI} {A}gents,'' in \emph{Proc. of {AAMAS} '14}.\hskip 1em plus 0.5em minus
  0.4em\relax IFAAMAS, 2014, pp. 749--756.

\bibitem{DannT0L20}
M.~Dann, J.~Thangarajah, Y.~Yao, and B.~Logan, ``Intention-{A}ware {M}ultiagent
  {S}cheduling,'' in \emph{Proc. of {AAMAS} '20}.\hskip 1em plus 0.5em minus
  0.4em\relax IFAAMAS, 2020, pp. 285--293.

\bibitem{DBLP:conf/crisis/KordyKB16}
B.~Kordy, P.~Kordy, and Y.~van~den Boom, ``{SPT}ool - {E}quivalence {C}hecker
  for {SAND} {A}ttack {T}rees,'' in \emph{Proc. of {CR}i{SIS} 2016}.\hskip 1em
  plus 0.5em minus 0.4em\relax Springer, 2016, pp. 105--113.

\bibitem{DBLP:conf/gramsec/EadesJB18}
H.~{Eades III}, J.~Jiang, and A.~Bryant, ``{O}n {L}inear {L}ogic, {F}unctional
  {P}rogramming, and {A}ttack {T}rees,'' in \emph{Proc. of {G}ra{MS}ec
  2018}.\hskip 1em plus 0.5em minus 0.4em\relax Springer, 2018, pp. 71--89.

\bibitem{DBLP:journals/fuin/HorneMT17}
R.~Horne, S.~Mauw, and A.~Tiu, ``Semantics for specialising attack trees based
  on linear logic,'' \emph{Fundam. Informaticae}, vol. 153, no. 1-2, pp.
  57--86, 2017.

\bibitem{ADT2AMASDemoPaper}
J.~Arias, W.~Penczek, L.~Petrucci, and T.~Sidoruk, ``{ADT2AMAS:} {M}anaging
  {A}gents in {A}ttack-{D}efence {S}cenarios,'' in \emph{Proceedings of {AAMAS}
  '21}.\hskip 1em plus 0.5em minus 0.4em\relax {ACM}, 2021, pp. 1749--1751.

\bibitem{ICECCS2022}
J.~Arias, L.~Masko, W.~Penczek, L.~Petrucci, and T.~Sidoruk, ``Minimal
  {S}chedule with {M}inimal {N}umber of {A}gents in {A}ttack-{D}efence
  {T}rees,'' in \emph{Proceedings of {ICECCS} 2022}.\hskip 1em plus 0.5em minus
  0.4em\relax {IEEE}, 2022, pp. 1--10.

\bibitem{adt2amas}
``{ADT2AMAS}: {T}ool for {M}anaging {A}gents in {A}ttack-{D}efense
  {S}cenarios,''
  \url{https://depot.lipn.univ-paris13.fr/parties/tools/adt2amas}.

\bibitem{adt2maude}
``{ADT2MAUDE}: {T}ool for {M}anaging {A}gents in {A}ttack-{D}efense
  {S}cenarios,''
  \url{https://depot.lipn.univ-paris13.fr/parties/tools/adt2maude}.

\bibitem{DBLP:journals/entcs/Marti-OlietMV09}
N.~Mart{\'{\i}}{-}Oliet, J.~Meseguer, and A.~Verdejo, ``{A} {R}ewriting
  {S}emantics for {M}aude {S}trategies,'' in \emph{Proc. of {WRLA} 2008}, ser.
  ENTCS, G.~Rosu, Ed., vol. 238, no.~3.\hskip 1em plus 0.5em minus 0.4em\relax
  Elsevier, 2008, pp. 227--247.

\bibitem{KnapikAPJP19}
M.~Knapik, {\'{E}}.~Andr{\'{e}}, L.~Petrucci, W.~Jamroga, and W.~Penczek,
  ``Timed {ATL:} {F}orget {M}emory, {J}ust {C}ount,'' \emph{Journal of
  Artificial Intelligence Research}, vol.~66, pp. 197--223, 2019.

\bibitem{DBLP:journals/jlp/RochaMM17}
C.~Rocha, J.~Meseguer, and C.~A. Mu{\~{n}}oz, ``Rewriting {M}odulo {SMT} and
  {O}pen {S}ystem {A}nalysis,'' \emph{J. Log. Algebraic Methods Program.},
  vol.~86, no.~1, pp. 269--297, 2017.

\end{thebibliography}

\end{document}